\theoremstyle{plain}
\newtheorem{theorem}{Theorem}[section]
\newtheorem{lemma}[theorem]{Lemma}
\newtheorem{proposition}[theorem]{Proposition}
\newtheorem{corollary}[theorem]{Corollary}
\newtheorem{example}[theorem]{Example}
\theoremstyle{definition}
\newtheorem{definition}[theorem]{Definition}
\newtheorem{assumption}[theorem]{Assumption}
\newtheorem{remark}[theorem]{Remark}
\theoremstyle{remark}
\numberwithin{equation}{section}
\newcommand{\ba}{\begin{array}{ll}}
\newcommand{\bal}{\begin{array}{ll}}
\newcommand{\ea}{\end{array}}
\newcommand{\E}{\mathbb{E}}
\newcommand{\probp}{\mathbb{P}}
\newcommand{\R}{\mathbb{R}}
\newcommand{\N}{\mathbb{N}}
\newcommand{\cE}{{\mathcal{E}}}
\newcommand{\cF}{{\mathcal{F}}}
\newcommand{\cS}{{\mathcal{S}}}
\newcommand{\cA}{\mathcal{A}}
\newcommand{\cL}{\mathcal{L}}
\newcommand{\cM}{\mathcal{M}}
\newcommand{\cX}{{\mathcal{X}}}
\newcommand{\one}{\mathbbm 1}
\newcommand{\dom}{\mathop {\rm dom}\nolimits}
\newcommand{\Span}{\mathop{\rm span}\nolimits}
\def\keywords{\vspace{.5em}
{\noindent\textbf{Keywords}:\,\relax%
}}
\def\@fnsymbol#1{\ensuremath{\ifcase#1\or 1\or 2\or 3\or 4\or 5\or 6\or 7\or 8\else\@ctrerr\fi}}
\begin{document}

\title{Law-invariant functionals that collapse to the mean}

\author{\sc{Fabio Bellini}}
\affil{Department of Statistics and Quantitative Methods\\
University of Milano-Bicocca, Italy\\
\texttt{\normalsize fabio.bellini@unimib.it}}
\author{\sc{Pablo Koch-Medina},\,\,\sc{Cosimo Munari}}
\affil{Center for Finance and Insurance and Swiss Finance Institute\\
University of Zurich, Switzerland\\
\texttt{\normalsize pablo.koch@bf.uzh.ch},\,\,\texttt{\normalsize cosimo.munari@bf.uzh.ch}}
\author{\sc{Gregor Svindland}}
\affil{Institute of Probability and Statistics and House of Insurance\\
Leibniz University Hannover, Germany\\
\texttt{\normalsize gregor.svindland@insurance.uni-hannover.de}}

\date{\today}

\maketitle

\begin{abstract}
\noindent We discuss when law-invariant convex functionals ``collapse to the mean''. More precisely, we show that, in a large class of spaces of random variables and under mild semicontinuity assumptions, the expectation functional is, up to an affine transformation, the only law-invariant convex functional that is linear along the direction of a nonconstant random variable with nonzero expectation. This extends results obtained in the literature in a bounded setting and under additional assumptions on the functionals. We illustrate the implications of our general results for pricing rules and risk measures.
\end{abstract}

\keywords{law invariance, affinity, translation invariance, pricing rules, risk measures}

\parindent 0em \noindent


\section{Introduction}

In a well-known paper Wang et al.\ \cite{WangYoungPanjer1997}, the authors describe an axiomatic approach to insurance pricing and provide a representation of admissible pricing rules in terms of Choquet integrals. One of the key axioms put forward is law invariance, stipulating that prices depend on the contracts' payoffs only through their probability distribution with respect to the ``physical'' probability measure. At the end of that paper, it is pointed out that law-invariant pricing rules based on Choquet integrals could also be used to harmonize the pricing of insurance products and financial derivatives. It is, however, not difficult to see that law invariance of the pricing functional cannot be expected to hold in general. For instance, the Fundamental Theorem of Asset Pricing asserts that, under suitable conditions, in a financial market that is frictionless and free of arbitrage opportunities, prices can be essentially expressed as expectations with respect to a ``risk-neutral'' probability measure. It is with respect to such a probability measure that prices in this market are law invariant. Hence, for financial market prices to exhibit law invariance with respect to the ``physical'' probability measure, the ``physical'' and the ``risk-neutral'' measures would have to coincide. This is, however, never the case with the sole exception of a market in which the expected returns under the ``physical'' measure is the same for all assets.

\smallskip

Prompted by the attempts in Wang \cite{Wang2000,Wang2002} to carry out the harmonization suggested in Wang et al.~\cite{WangYoungPanjer1997} by means of law-invariant pricing rules, Castagnoli et al.\ \cite{CastagnoliMaccheroniMarinacci2004} show that postulating the law invariance of pricing functionals is questionable also in a more general setting than that of frictionless financial markets. This was accomplished by proving that the expectation under the ``physical'' probability measure is the only pricing functional defined on the space of bounded payoffs that is law invariant, sublinear, increasing, and comonotonic (properties satisfied by the pricing rules considered in Wang \cite{Wang2000,Wang2002}), and under which every riskless payoff and at least one risky payoff are priced in a frictionless way. This ``collapse to the mean'' was improved in Frittelli and Rosazza Gianin~\cite{FrittelliRosazza2005} by replacing sublinearity with convexity and by dropping comonotonicity. We note though that, strictly speaking, these results cannot be directly applied to the setting of Wang \cite{Wang2000,Wang2002} because the payoffs considered there are not necessarily bounded. A detailed discussion of the results in Castagnoli et al.\ \cite{CastagnoliMaccheroniMarinacci2004} and Frittelli and Rosazza Gianin~\cite{FrittelliRosazza2005} and how they relate to ours is given at the beginning of Section~5.

\smallskip

The preceding discussion raises the question of whether the ``collapse to the mean'' remains valid for a \textit{wider range of spaces of random variables} and for a \textit{larger class of law-invariant functionals}. In this note, we allow the model space $\cX$ to belong to a fairly general class of locally-convex spaces consisting of integrable random variables and containing all bounded random variables. In Theorem \ref{theo: collapse} we prove that, under suitable lower semicontinuity properties (which are always satisfied in the setting of Castagnoli et al.\ \cite{CastagnoliMaccheroniMarinacci2004} and Frittelli and Rosazza Gianin \cite{FrittelliRosazza2005}), the expectation functional is, up to an affine transformation, the only law-invariant convex functional $\varphi:\cX\to(-\infty,\infty]$ that is linear along a nonconstant random variable $Z$ with nonzero expectation. The strategy we follow differs from the one used in the referenced papers and relies on the identification of an inherent tension between law invariance and linearity that sheds new light into why law invariance has such strong structural implications. The key observation, established in Lemma~4.4, is that the set of random variables that have the same distribution as $Z$ spans a dense subspace of $\cX$. As a result, linearity along $Z$ together with law invariance forces linearity on this dense subspace. The lower semicontinuity assumption then implies that $\varphi$ is linear on the entire space. The result follows by noting that the only continuous linear functionals that are law invariant are multiples of the expectation functional. This new version of the ``collapse to the mean'' has natural applications to insurance pricing rules, which is our motivating problem, as well as to risk measures. In particular, it provides a rigorous argument for why, contrary to what was claimed in Wang \cite{Wang2000,Wang2002}, law-invariant insurance pricing rules cannot be expected to reproduce prices in a frictionless and arbitrage-free financial market.

\smallskip

The note is organized as follows. In Section 2 we introduce the setting together with the necessary notation and terminology. In Section 3 we show that convex functionals that are lower semicontinuous and linear along a given direction enjoy the stronger property of being translation invariant along the same direction. In Section 4 we establish our main result on the ``collapse to the mean''. Some applications of our result are discussed in Section 5.


\section{Setting, notation, terminology}
\label{sect: setup}

Let $(\Omega,\cF,\probp)$ be a nonatomic probability space. We denote by $L^0$ the set of equivalence classes of random variables, i.e.\ Borel measurable functions $X:\Omega\to\R$, with respect to almost-sure equality under $\probp$. In line with standard practice, we do not distinguish explicitly between an element of $L^0$ and any of its representatives. In particular, the elements of $\R$ are naturally identified with random variables that are almost-surely constant. For two random variables $X,Y\in L^0$ we write $X\sim Y$ whenever $X$ and $Y$ have the same probability law under $\probp$. The expectation under $\probp$ is denoted by $\E_\probp$. The standard Lebesgue spaces are denoted by $L^p$ for $p\in[1,\infty]$. We say that a set $\cX\subset L^0$ is law invariant (under $\probp$) if $X\in\cX$ for every $X\in L^0$ such that $X\sim Y$ for some $Y\in\cX$.

\smallskip

\begin{assumption}
\label{assumption one}
We denote by $(\cX,\cX^\ast)$ a pair of law-invariant vector subspaces of $L^1$ containing $L^\infty$. We assume that $XY\in L^1$ for all $X\in\cX$ and $Y\in\cX^\ast$ and denote by $\sigma(\cX,\cX^\ast)$ the weakest linear topology on $\cX$ with respect to which, for every $Y\in\cX^\ast$, the linear functional on $\cX$ given by $X\mapsto\E_\probp[XY]$ is continuous.
\end{assumption}

\smallskip

\begin{remark}
(i) Note that, under our assumptions, $\sigma(\cX,\cX^\ast)$ is not metrizable.\footnote{In general, weak topologies can be metrizable, but not in our setting. Using the argument in the proof of the implication ``(4) $\implies$ (1)'' in Theorem 6.26 in \cite{AliprantisBorder2006}, one can show that metrizability of $\cX$ under $\sigma(\cX,\cX^\ast)$ would imply that $\cX^\ast$ can be written as the countable union of finite dimensional subspaces. Being a subspace of $\cX^\ast$, $L^\infty$ would also have this property. Baire's Lemma (Theorem~3.46 in \cite{AliprantisBorder2006}) would then imply that $L^\infty$ is finite dimensional, a contradiction. Hence, $\sigma(\cX,\cX^\ast)$ is not metrizable.} As a result, in general, one needs to work with nets instead of sequences. Recall that a net $(X_\alpha)\subset\cX$ converges to an element $X\in\cX$ with respect to the topology $\sigma(\cX,\cX^\ast)$ if and only if $\E_\probp[X_\alpha Y]\to\E_\probp[XY]$ for every $Y\in\cX^\ast$.

\smallskip

(ii) Note that for every nonzero $X\in\cX$ there exists $Y\in\cX^\ast$, namely either $Y=\one_{\{X>0\}}$ or $Y=\one_{\{X<0\}}$ (which belong to $\cX^\ast$ because they are bounded), such that $\E_\probp[XY]\neq 0$. Similarly, for every nonzero $Y\in\cX^\ast$ there exists $X\in\cX$ such that $\E_\probp[XY]\neq 0$. Hence, $(\cX,\cX^\ast)$ is a dual pair. In particular, Theorem 5.93 in Aliprantis and Border \cite{AliprantisBorder2006} implies that, endowed with $\sigma(\cX,\cX^\ast)$, the space $\cX$ is a locally-convex Hausdorff topological vector space whose topological dual can be identified with $\cX^\ast$.
\end{remark}

\smallskip

We next highlight that the class of spaces we consider is sufficiently general to accommodate virtually all Banach spaces encountered in applications as long as their dual can be identified with a space of integrable random variables. As is usual in the literature on law invariance, this rules out  $L^\infty$ with its norm dual which consists of signed finitely additive measures.

\smallskip

\begin{example}[\textbf{Orlicz Spaces}]
Let $\Phi:[0,\infty)\to[0,\infty]$ be an Orlicz function, i.e.\ a convex, left-continuous, increasing function which is finite on a right neighborhood of zero and satisfies $\Phi(0)=0$. The conjugate of $\Phi$ is the function $\Phi^\ast:[0,\infty)\to[0,\infty]$ defined by
\[
\Phi^\ast(u) := \sup_{t\in[0,\infty)}\{tu-\Phi(t)\}.
\]
Note that $\Phi^\ast$ is also an Orlicz function. For every $X\in L^0$ define the Luxemburg norm by
\[
\|X\|_\Phi := \inf\left\{\lambda\in(0,\infty) \,; \ \E\left[\Phi\left(\frac{|X|}{\lambda}\right)\right]\leq1\right\}.
\]
The corresponding Orlicz space is given by
\[
L^\Phi:=\{X\in L^0 \,; \ \|X\|_\Phi<\infty\}.
\]
The heart of $L^\Phi$ is the space
\[
H^ \Phi:=\left\{X\in L^\Phi \,; \ \forall \lambda\in(0,\infty) \,:\, \E\left[\Phi\left(\frac{|X|}{\lambda}\right)\right]<\infty\right\}.
\]
The classical Lebesgue spaces are special examples of Orlicz spaces. Indeed, if $\Phi(t)=t^p$ for $p\in[1,\infty)$ and $t\in[0,\infty)$, then $L^\Phi=H^\Phi=L^p$ and the Luxemburg norm coincides with the usual $p$ norm. Moreover, if we set $\Phi(t)=0$ for $t\in[0,1]$ and $\Phi(t)=\infty$ otherwise, then we have $L^\Phi=L^\infty$ and the Luxemburg norm coincides with the usual $L^\infty$-norm. Note that, in this case, $H^\Phi=\{0\}$.

\smallskip

In our nonatomic setting, $L^\Phi=H^\Phi$ if and only if $\Phi$ satisfies the $\Delta_2$ condition, i.e.\ there exist $s\in(0,\infty)$ and $k\in(0,\infty)$ such that $\Phi(2t)<k\Phi(t)$ for every $t\in[s,\infty)$. A well-known example of a nontrivial $H^\Phi$ with $H^\Phi\neq L^\phi$ is obtained by setting $\Phi(t)=\exp(t)-1$ for $t\in[0,\infty)$.

\smallskip

In general, the norm dual of $L^\Phi$ cannot be identified with a subspace of $L^0$. However, if $\Phi$ is finite valued (so that $H^\Phi\neq\{0\}$), the norm dual of $H^\Phi$ can always be identified with $L^{\Phi^\ast}$. For the case $L^p$, for $p\in[1,\infty)$, this is simply the well-known identification of the norm dual of $L^p$ with $L^{\frac{p}{p-1}}$ (with the usual convention $\frac10:=\infty$). For more details on Orlicz spaces we refer to Edgar and Sucheston \cite{EdgarSucheston1992}.

\smallskip

The pair $(\cX,\cX^\ast)$ with $\cX=L^\Phi$ and $\cX^\ast\in\{L^{\Phi^\ast},H^{\Phi^\ast},L^\infty\}$ satisfies Assumption~\ref{assumption one}.
\end{example}

\smallskip

In the following definition we introduce the necessary terminology for functionals.

\begin{definition}
Let $\varphi:\cX\to(-\infty,\infty]$ be a functional. The {\em domain} of  $\varphi$ is the set
\[
\dom(\varphi) := \{X\in\cX \,; \ \varphi(X)<\infty\}.
\]
We say that the functional $\varphi$ is:
\begin{enumerate}[(1)]
  \item {\em proper} if $\dom(\varphi)$ is nonempty.
  \item {\em convex} if $\varphi(\lambda X+(1-\lambda)Y)\leq\lambda\varphi(X)+(1-\lambda)\varphi(Y)$ for all $X,Y\in\cX$ and $\lambda\in[0,1]$.
  \item {\em positively homogeneous} if $\varphi(0)=0$ and $\varphi(\lambda X)=\lambda\varphi(X)$ for all $X\in\cX$ and $\lambda\in(0,\infty)$.
  \item {\em sublinear} if it is both convex and positively homogeneous.
  \item {\em increasing} if $\varphi(X)\geq\varphi(Y)$ for all $X,Y\in\cX$ such that $X\geq Y$.
  \item {\em decreasing} if $\varphi(X)\leq\varphi(Y)$ for all $X,Y\in\cX$ such that $X\geq Y$.
  \item {\em law invariant} if $\varphi(X)=\varphi(Y)$ for all $X,Y\in\cX$ such that $X\sim Y$.
  \item {\em $\sigma(\cX,\cX^\ast)$-lower semicontinuous} if for all nets $(X_\alpha)\subset\cX$ and $X\in\cX$ we have
\[
X_\alpha\xrightarrow{\sigma(\cX,\cX^\ast)}X \ \implies \ \varphi(X)\leq\liminf_{\alpha}\varphi(X_\alpha).
\]
  \item {\em norm-lower semicontinuous} if for all sequences $(X_n)\subset\cX$ and $X\in\cX$ we have
\[
X_n\xrightarrow{\|\cdot\|}X \ \implies \ \varphi(X)\leq\liminf_{n\to\infty}\varphi(X_n)
\]
provided that $\cX$ is equipped with a norm $\|\cdot\|$.
\end{enumerate}
Finally, we say that the functional $\varphi$ satisfies:
\begin{enumerate}[(10)]
  \item the {\em Fatou property} if for all sequences $(X_n)\subset\cX$ and $X\in\cX$ we have
\[
X_n\xrightarrow{a.s.}X, \ \sup_{n\in\N}|X_n|\in\cX \ \implies \ \varphi(X)\leq\liminf_{n\to\infty}\varphi(X_n).
\]
\end{enumerate}
\end{definition}

\smallskip

To a proper functional $\varphi:\cX\to(-\infty,\infty]$ we associate the \textit{dual functional} $\varphi^\ast:\cX^\ast\to(-\infty,\infty]$ defined by
\[
\varphi^\ast(Y) := \sup_{X\in\cX}\{\E_\probp[XY]-\varphi(X)\}.
\]
Note that $\varphi^\ast$ is well defined and does not attain the value $-\infty$ because $\varphi$ is proper. The next proposition records the well-known dual representation of convex and lower semicontinuous functionals; see, e.g., Theorem 2.3.3 in Z\u{a}linescu \cite{Zalinescu2002}.

\begin{proposition}
\label{prop: fenchel moreau}
Let $\varphi:\cX\to(-\infty,\infty]$ be proper, convex, and $\sigma(\cX,\cX^\ast)$-lower semicontinuous. Then, for every $X\in\cX$ we have
\[
\varphi(X) = \sup_{Y\in\cX^\ast}\{\E_\probp[XY]-\varphi^\ast(Y)\} = \sup_{Y\in\dom(\varphi^\ast)}\{\E_\probp[XY]-\varphi^\ast(Y)\}.
\]
\end{proposition}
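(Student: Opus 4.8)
The plan is to read this statement as the Fenchel--Moreau biconjugate theorem and to prove it by Hahn--Banach separation, which is available precisely because, as noted in the Remark above, $(\cX,\sigma(\cX,\cX^\ast))$ is a locally-convex Hausdorff space whose topological dual is $\cX^\ast$. Writing $\psi(X):=\sup_{Y\in\cX^\ast}\{\E_\probp[XY]-\varphi^\ast(Y)\}$ for the biconjugate, the inequality $\psi\le\varphi$ is the free direction: the very definition of $\varphi^\ast$ gives the Fenchel--Young inequality $\E_\probp[XY]-\varphi^\ast(Y)\le\varphi(X)$ for all $X\in\cX$ and $Y\in\cX^\ast$, and taking the supremum over $Y$ yields $\psi\le\varphi$ pointwise. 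Everything therefore reduces to the reverse inequality $\varphi\le\psi$.

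For the reverse inequality I would separate in the product space $\cX\times\R$. Since $\varphi$ is convex and $\sigma(\cX,\cX^\ast)$-lower semicontinuous, its epigraph $\{(X,s)\,;\ s\ge\varphi(X)\}$ is convex and closed. Fixing $X_0$ and assuming $\psi(X_0)<\varphi(X_0)$, I choose $r$ strictly in between, so that $(X_0,r)$ lies outside the epigraph, and strictly separate it from the epigraph by a continuous linear functional, say in the form $\E_\probp[XY]+\beta s\ge\E_\probp[X_0 Y]+\beta r+\delta$ on the epigraph, with $Y\in\cX^\ast$, $\beta\in\R$ and $\delta>0$; this uses that the dual of $\cX$ is $\cX^\ast$. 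Letting $s\to\infty$ along the epigraph forces $\beta\ge0$. In the principal case $\beta>0$ one takes $s=\varphi(X)$ for $X\in\dom(\varphi)$, rearranges, and passes to the supremum over $X$ to read off, with $Y':=-Y/\beta$, that $\varphi^\ast(Y')\le\E_\probp[X_0 Y']-r$, i.e. $\E_\probp[X_0 Y']-\varphi^\ast(Y')\ge r$, contradicting $\psi(X_0)<r$.

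The main obstacle is the degenerate case $\beta=0$, in which the separating hyperplane is vertical and conveys no information about the values of $\varphi$; it can occur only when $X_0\notin\dom(\varphi)$ (otherwise testing the separating inequality at $X=X_0$ is absurd), so that $\varphi(X_0)=\infty$ and the goal becomes to show $\psi(X_0)=\infty$. Here the vertical separation provides $Y\in\cX^\ast$ and $\delta>0$ with $\E_\probp[(X_0-X)Y]\le-\delta$ for all $X\in\dom(\varphi)$. To exploit this I first record that a proper, convex, lower semicontinuous $\varphi$ admits a continuous affine minorant: separating a point $(X_1,\varphi(X_1)-1)$ lying below the graph at some $X_1\in\dom(\varphi)$ cannot yield a vertical hyperplane, since the vertical line through $X_1$ already meets the epigraph; this produces $Y_0\in\dom(\varphi^\ast)$ with $\E_\probp[\cdot\,Y_0]-\varphi^\ast(Y_0)\le\varphi$, and in particular $\dom(\varphi^\ast)\neq\emptyset$. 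I then tilt this minorant against the vertical direction by setting $Y_t:=Y_0-tY\in\cX^\ast$ and estimating, for $t>0$,
\[
\E_\probp[X_0 Y_t]-\varphi^\ast(Y_t)=\inf_{X\in\dom(\varphi)}\big\{\E_\probp[(X_0-X)Y_0]+\varphi(X)-t\,\E_\probp[(X_0-X)Y]\big\}\ge m+t\delta,
\]
where $m:=\E_\probp[X_0 Y_0]-\varphi^\ast(Y_0)\in\R$. Letting $t\to\infty$ forces $\psi(X_0)=\infty$, the required contradiction.

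Finally, since $\dom(\varphi^\ast)\neq\emptyset$ and every $Y\notin\dom(\varphi^\ast)$ contributes the value $-\infty$ to the supremum defining $\psi$, restricting the supremum to $\dom(\varphi^\ast)$ leaves it unchanged, which is the second equality in the statement. As this is precisely the Fenchel--Moreau theorem, the only genuine care required lies in the bookkeeping of signs in the separation and in the vertical-hyperplane tilting just outlined; otherwise one may simply invoke Theorem 2.3.3 in Z\u{a}linescu \cite{Zalinescu2002}.
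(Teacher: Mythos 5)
Your proof is correct. The paper does not actually prove this proposition---it records it as a well-known result and simply cites Theorem 2.3.3 in Z\u{a}linescu \cite{Zalinescu2002}---and your argument is precisely the standard Hahn--Banach separation proof of the Fenchel--Moreau theorem in the dual pair $(\cX,\cX^\ast)$, with the two genuinely delicate points (ruling out a vertical separating hyperplane when $X_0\in\dom(\varphi)$, and tilting a continuous affine minorant to handle the case $\varphi(X_0)=\infty$) treated correctly, so it supplies exactly the details behind the paper's citation.
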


\medskip

The next example serves to highlight that requiring $\sigma(\cX,\cX^\ast)$-lower semicontinuity for convex and law-invariant functionals is not as restrictive as it may seem at first sight since, on standard spaces, $\sigma(\cX,\cX^\ast)$-lower semicontinuity for this type of functionals is implied by fairly common continuity properties.

\begin{example}[\textbf{Orlicz Spaces}]
\label{example: fatou}
The following results can be found in Proposition~2.5 in Bellini et al \cite{BelliniKochMunariSvindland2020}, which merely summarizes results from the literature (Jouini et al.\ \cite{JouiniSchachermayerTouzi2006}, Svindland \cite{Svindland2010}, and Gao et al.\ \cite{GaoLeungMunariXanthos2018}. We also refer to Leung and Tantrawan \cite{LeungTantrawan2020} for abstract results beyond the Orlicz setting).

\smallskip

If $\cX$ is a general Orlicz space $L^\Phi$ and $\varphi:\cX\to(-\infty,\infty]$ is a proper, convex, and law invariant functional, then the following statements are equivalent:
\begin{enumerate}
\item[(a)] $\varphi$ is $\sigma(\cX,L^\infty)$-lower semicontinuous.
\item[(b)] $\varphi$ satisfies the Fatou property.
\end{enumerate}
If $\cX$ is either $L^\infty$ or an Orlicz heart $H^\Phi$ for a finite Orlicz function $\Phi$ (in particular, any $L^p$ with $1\leq p<\infty$), then (a) is also equivalent to:
\begin{enumerate}
\item[(c)] $\varphi$ is norm lower semicontinuous.
\end{enumerate}
The example given in Remark~5.6 in Gao et al.\ \cite{GaoLeungMunariXanthos2018} shows that, for a general Orlicz space, norm lower semicontinuity does not always imply $\sigma(\cX,L^\infty)$ lower semicontinuity. If $\varphi$ is additionally increasing, then (a) is also equivalent to:
\begin{enumerate}
  \item[(d)] $\varphi$ is continuous from below, i.e.\ for every increasing sequence $(X_n)\subset\cX$ and every $X\in\cX$ we have
\[
X_n\xrightarrow{a.s.}X \ \implies \ \varphi(X_n)\to\varphi(X).
\]
\end{enumerate}
Clearly, in all these cases, $\varphi$ is also $\sigma(\cX,\cX^\ast)$-lower semicontinuous.
\end{example}


\section{Affinity and translation invariance}

The goal of this short section is to show the link between two properties of functionals that will play a key role in our main result in the next section, namely affinity and translation invariance. The functionals considered in this section are not required to be law invariant. Throughout we assume that $(\cX,\cX^\ast)$ is a pair satisfying Assumption~\ref{assumption one}. For a set $\cS\subset\cX$ we denote by $\Span(\cS)$ the smallest linear subspace of $\cX$ containing $\cS$. If $\cS=\{Z\}$ for some $Z\in\cX$, then we simply write $\Span(Z)$.

\begin{definition}
Let $\cM$ be a linear subspace of $\cX$. We say that a functional $\varphi:\cX\to(-\infty,\infty]$ is:
\begin{enumerate}[(1)]
\item {\em affine along $\cM$} if $\cM\subset\dom(\varphi)$ and the functional on $\cM$ given by $Z\mapsto\varphi(Z)-\varphi(0)$ is linear. If $\cM=\Span(Z)$ for some $Z\in\cX$, then we simply say that $\varphi$ is {\em affine along $Z$}. In this case, there exists $a\in\R$ such that for every $m\in\R$
\[
\varphi(mZ) = am+\varphi(0).
\]
\item {\em translation invariant along $\cM$} if $\varphi$ is affine along $\cM$ and for all $X\in\cX$ and $Z\in\cM$
\[
\varphi(X+Z) = \varphi(X)+\varphi(Z)-\varphi(0).
\]
If $\cM=\Span(Z)$ for some $Z\in\cX$, then we simply say that $\varphi$ is {\em translation invariant along $Z$}. In this case, there exists $a\in\R$ such that for all $X\in\cX$ and $m\in\R$
\[
\varphi(X+mZ) = \varphi(X)+am.
\]
\end{enumerate}
In both cases we have $a=\varphi(Z)-\varphi(0)$.
\end{definition}

\smallskip

\begin{remark}
\label{rem: translation invariance and affinity}
Let $\cS\subset\cX$ and assume that $\varphi:\cX\to(-\infty,\infty]$ is translation invariant along every element of $\cS$. Then, $\varphi$ is translation invariant along $\Span(\cS)$. In particular, $\varphi$ is affine on $\Span(\cS)$. However, note that $\varphi$ need not be affine along $\Span(\cS)$ if it is affine along every element of $\cS$. Clearly, the only functionals that are translation invariant along $\cX$ are those that are affine on $\cX$.
\end{remark}

\smallskip

By definition, translation invariance implies affinity. As shown by the next example, the converse implication does not hold in general even if we assume that $\varphi$ is convex.

\begin{example}
\label{example convex and affine}
Assume $W,Z\in L^1$ are linearly independent and define a functional $\varphi:L^1\to(-\infty,\infty]$ by
\[
\varphi(X)=
\begin{cases}
0 & \mbox{if $X=\alpha W+\beta Z$ for some $\alpha,\beta\in\R$ with $\alpha<1$},\\
\beta^2 & \mbox{if $X=W+\beta Z$ for some $\beta\in\R$},\\
\infty & \mbox{otherwise}.
\end{cases}
\]
It is not difficult to verify that $\varphi$ is convex and also affine along $Z$. However, $\varphi$ is not translation invariant along $Z$ because there exists no $a\in\R$ such that $m^2=\varphi(W+mZ)=\varphi(W)+am=am$ for every $m\in\R$.
\end{example}

\smallskip

There are two notable classes of functionals for which affinity does imply translation invariance. The first is the class of sublinear functionals.

\begin{proposition}
\label{prop: equivalence affinity under sublinearity}
Let $\varphi:\cX\to(-\infty,\infty]$ be sublinear and $\cS\subset\cX$. If $\varphi$ is affine along every element of $\cS$, then it is translation invariant along $\Span(\cS)$.
\end{proposition}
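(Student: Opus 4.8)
The plan is to reduce the statement to the single-element case and then exploit subadditivity. By Remark~\ref{rem: translation invariance and affinity}, translation invariance along every element of $\cS$ already implies translation invariance along $\Span(\cS)$. Hence it suffices to prove the following local claim: if $\varphi$ is sublinear and affine along a single $Z\in\cX$, then $\varphi$ is translation invariant along $Z$. Applying this claim to each $Z\in\cS$ and invoking the remark then yields the proposition.

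To set up the local claim, I would first record two consequences of sublinearity. Positive homogeneity gives $\varphi(0)=0$, and convexity together with positive homogeneity gives subadditivity, since $\varphi(X+Y)=2\varphi(\tfrac12 X+\tfrac12 Y)\leq\varphi(X)+\varphi(Y)$ for all $X,Y\in\cX$ (with the usual convention in $(-\infty,\infty]$). Next, I would note that affinity along $Z$, combined with positive homogeneity, upgrades to the full identity $\varphi(mZ)=ma$ for \emph{all} $m\in\R$, where $a=\varphi(Z)$: for $m>0$ this is positive homogeneity, for $m=0$ it is $\varphi(0)=0$, and for $m<0$ it is precisely what affinity along $Z$ supplies (equivalently, $\varphi(-Z)=-\varphi(Z)$).

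The core of the argument is then a two-sided sandwich. Fix $X\in\cX$ and $m\in\R$. Subadditivity gives the upper bound $\varphi(X+mZ)\leq\varphi(X)+\varphi(mZ)=\varphi(X)+ma$. For the matching lower bound, I would apply subadditivity to the decomposition $X=(X+mZ)+(-mZ)$, obtaining $\varphi(X)\leq\varphi(X+mZ)+\varphi(-mZ)=\varphi(X+mZ)-ma$, where the last equality uses $\varphi(-mZ)=-ma$ from the previous step. Rearranging yields $\varphi(X)+ma\leq\varphi(X+mZ)$, and combining the two inequalities gives the desired equality $\varphi(X+mZ)=\varphi(X)+ma$.

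I do not expect a serious obstacle here; the argument is short and self-contained. The point that deserves care is that the lower bound genuinely requires \emph{affinity} and not merely positive homogeneity: positive homogeneity alone only controls $\varphi(mZ)$ for $m\geq 0$, so without affinity one could not assert $\varphi(-mZ)=-ma$, and the lower bound would fail. One should also check that the identity persists when $X\notin\dom(\varphi)$, but this is immediate from the extended-value conventions: the upper bound reads $\varphi(X+mZ)\leq\infty$ trivially, while the lower bound forces $\varphi(X+mZ)=\infty$ as well.
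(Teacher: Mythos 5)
Your proof is correct and follows essentially the same route as the paper's: both reduce to translation invariance along each single element via Remark~\ref{rem: translation invariance and affinity}, and both obtain the key identity by sandwiching $\varphi(X+mZ)$ between the subadditivity bound $\varphi(X)+\varphi(mZ)$ and the bound from the decomposition $X=(X+mZ)+(-mZ)$ together with $\varphi(-mZ)=-\varphi(mZ)$ from affinity. The paper packages the two inequalities into a single chain starting and ending at $\varphi(X+Z)$, but the content is identical.
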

\begin{proof}
Recall that $\varphi(0)=0$ by sublinearity and note that for every fixed $Z\in\cS$ the functional $\varphi$ is linear on $\Span(Z)$ by affinity. Hence, for every $X\in\cX$ we have
\begin{align*}
\varphi(X+Z)
&\le
\varphi(X)+\varphi(Z)\\
&=
\varphi(X+Z-Z)+\varphi(Z)\\
&\le
\varphi(X+Z)+\varphi(-Z)+\varphi(Z)\\
&=
\varphi(X+Z)
\end{align*}
by sublinearity. This shows that $\varphi$ is translation invariant along every element of $\cS$. Remark~\ref{rem: translation invariance and affinity} now implies that $\varphi$ is translation invariant along $\Span(\cS)$.
\end{proof}

\smallskip

We saw in Example~\ref{example convex and affine} that in the preceding result we cannot replace sublinearity by convexity. However, we may replace sublinearity by $\sigma(\cX,\cX^\ast)$-lower semicontinuity and convexity. In this case, lower semicontinuity forces translation invariance along the $\sigma(\cX,\cX^\ast)$-closure of $\Span(\cS)$ and delivers a dual representation that will be exploited in the context of law-invariant functionals in the next section.

\begin{theorem}
\label{theo: affinity implies translation invariance}
Let $\varphi:\cX\to(-\infty,\infty]$ be proper, convex, and $\sigma(\cX,\cX^\ast)$-lower semicontinuous and $\cS\subset\cX$. If $\varphi$ is affine along every element of $\cS$, then $\varphi$ is translation invariant along $\cM$, where $\cM$ is the $\sigma(\cX,\cX^\ast)$-closure of $\Span(\cS)$. Moreover, for all $Z\in\cM$ and $Y\in\dom(\varphi^\ast)$
\begin{equation}
\label{representation affine piece}
\varphi(Z)=\E_\probp[ZY]+\varphi(0).
\end{equation}
\end{theorem}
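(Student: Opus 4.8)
The plan is to run everything through the Fenchel--Moreau representation of Proposition~\ref{prop: fenchel moreau} and to extract from the affinity hypothesis a rigidity statement on the dual domain $\dom(\varphi^\ast)$. Since $\varphi$ is proper, convex, and $\sigma(\cX,\cX^\ast)$-lower semicontinuous, $\varphi^\ast$ is proper, so $\dom(\varphi^\ast)\neq\emptyset$; moreover, evaluating the representation at $0$ gives
\[
\varphi(0)=\sup_{Y\in\dom(\varphi^\ast)}\{-\varphi^\ast(Y)\},
\]
and this value is finite because, fixing any $Z\in\cS$, affinity along $Z$ forces $0\in\Span(Z)\subset\dom(\varphi)$.

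The crux of the argument is the following rigidity claim: \emph{for every $Z\in\cS$ and every $Y\in\dom(\varphi^\ast)$ one has $\E_\probp[ZY]=\varphi(Z)-\varphi(0)$.} To prove it, fix such $Z,Y$ and set $a:=\varphi(Z)-\varphi(0)$, so that affinity along $Z$ reads $\varphi(mZ)=am+\varphi(0)$ for all $m\in\R$. The Fenchel--Young inequality gives $\varphi(mZ)\ge m\,\E_\probp[ZY]-\varphi^\ast(Y)$, whence
\[
m\big(a-\E_\probp[ZY]\big)\ge -\varphi(0)-\varphi^\ast(Y)\qquad\text{for all }m\in\R.
\]
As the right-hand side is a finite constant (here $Y\in\dom(\varphi^\ast)$ is used), letting $m\to\pm\infty$ forces $a=\E_\probp[ZY]$, proving the claim.

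From here the reasoning is bookkeeping. For $Z\in\Span(\cS)$ the value $\E_\probp[ZY]$ is, by linearity in $Z$, still independent of the choice of $Y\in\dom(\varphi^\ast)$; denote this common value $\ell(Z)$. Since every functional $X\mapsto\E_\probp[XY]$ is $\sigma(\cX,\cX^\ast)$-continuous for $Y\in\cX^\ast\supset\dom(\varphi^\ast)$, taking a net in $\Span(\cS)$ converging to a given $Z\in\cM$ shows that $\E_\probp[ZY]$ is again independent of $Y\in\dom(\varphi^\ast)$, so $\ell$ extends to $\cM$. Substituting $\E_\probp[ZY]=\ell(Z)$ into the representation collapses the supremum:
\[
\varphi(Z)=\sup_{Y\in\dom(\varphi^\ast)}\{\ell(Z)-\varphi^\ast(Y)\}=\ell(Z)+\varphi(0)=\E_\probp[ZY]+\varphi(0),
\]
which is exactly \eqref{representation affine piece} and simultaneously displays $Z\mapsto\varphi(Z)-\varphi(0)=\ell(Z)$ as linear on $\cM$, i.e.\ $\varphi$ is affine along $\cM$.

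To finish with translation invariance, fix $X\in\cX$ and $Z\in\cM$ and apply the representation to $X+Z$. Because $\E_\probp[ZY]=\ell(Z)$ is a constant independent of $Y\in\dom(\varphi^\ast)$, it factors out of the supremum:
\[
\varphi(X+Z)=\sup_{Y\in\dom(\varphi^\ast)}\{\E_\probp[XY]+\ell(Z)-\varphi^\ast(Y)\}=\varphi(X)+\ell(Z)=\varphi(X)+\varphi(Z)-\varphi(0).
\]
I expect the main obstacle to be the passage from $\Span(\cS)$ to its closure $\cM$: one has to argue, through nets and the defining continuity of the dual pairing $\E_\probp[\cdot\,Y]$, that the ``slope'' $\ell$ stays well defined and $Y$-independent on the closure and not merely on the algebraic span. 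Everything else reduces to the one-line rigidity claim and the factoring of the dual supremum.
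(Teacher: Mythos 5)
Your proof is correct, and its core is the same as the paper's: the rigidity claim (that $\E_\probp[ZY]=\varphi(Z)-\varphi(0)$ for all $Z\in\cS$ and $Y\in\dom(\varphi^\ast)$, forced by letting $m\to\pm\infty$ in the Fenchel--Young inequality along the line $\R Z\subset\dom(\varphi)$) is exactly Step~1 of the paper's argument. Where you genuinely diverge is in the passage from $\Span(\cS)$ to its closure $\cM$. The paper first proves translation invariance along $\Span(\cS)$, then for $Z\in\cM$ runs a two-sided estimate: an upper bound $\varphi(Z)\le\E_\probp[ZY]+\varphi(0)$ from lower semicontinuity of $\varphi$ at $Z$ along an approximating net, and a matching lower bound obtained by writing $\varphi(Z)=\varphi(Z-Z_\alpha)+\E_\probp[Z_\alpha Y]$ and invoking lower semicontinuity of $\varphi$ at $0$; it then reapplies Step~2 with $\cM$ in place of $\cS$. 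You instead observe that the $Y$-independence of the pairing $Z\mapsto\E_\probp[ZY]$ over $Y\in\dom(\varphi^\ast)$ survives passage to the $\sigma(\cX,\cX^\ast)$-closure by continuity of the pairing alone (no semicontinuity of $\varphi$ needed at this stage), after which the Fenchel--Moreau supremum collapses in one line both at $Z\in\cM$ and at $X+Z$, yielding \eqref{representation affine piece} and translation invariance simultaneously. This is a clean simplification: all the topological work is off-loaded onto the dual pairing, and lower semicontinuity of $\varphi$ enters only once, packaged inside Proposition~\ref{prop: fenchel moreau}. The only cosmetic caveat is that your finiteness remark for $\varphi(0)$ presupposes $\cS\neq\emptyset$, which is the only case of interest.
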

\begin{proof}
{\em Step 1}. Take arbitrary $Z\in\cS$ and $Y\in\dom(\varphi^\ast)$. Since $m Z\in\dom(\varphi)$ for every $m\in\R$ by affinity, it follows from Proposition \ref{prop: fenchel moreau} that for every $m\in\R$ we have
\[
\sup_{m\in\R}\{m(\E_\probp[ZY]-\varphi(Z)+\varphi(0))\}-\varphi(0)=\sup_{m\in\R}\{\E_\probp[m ZY]-\varphi(mZ)\} \le\sup_{X\in\cX}\{\E_\probp[XY]-\varphi(X)\}< \infty.
\]
Clearly, this is only possible if $\varphi(Z)=\E_\probp[ZY]+\varphi(0)$. This establishes  \eqref{representation affine piece} when $Z\in\cS$.

\smallskip

{\em Step 2}. Take now arbitrary $Z\in\cS$ and $Y\in\dom(\varphi^\ast)$. It follows from Step 1 that $\E_\probp[ZY]=\varphi(Z)-\varphi(0)=\E_\probp[ZY']$ for every $Y'\in\dom(\varphi^\ast)$. Hence, we infer from Proposition \ref{prop: fenchel moreau} that for every $X\in\cX$
\begin{align*}
\varphi(X+Z)
&=
\sup_{Y'\in\dom(\varphi^\ast)}\{\E_\probp[(X+Z)Y']-\varphi^\ast(Y')\} \\
&=
\sup_{Y'\in\dom(\varphi^\ast)}\{\E_\probp[XY']-\varphi^\ast(Y')\}+\E_\probp[ZY] \\
&=
\varphi(X)+\E_\probp[ZY] \\
&=
\varphi(X)+\varphi(Z)-\varphi(0).
\end{align*}
This shows that $\varphi$ is translation invariant along every element of $\cS$. By Remark~\ref{rem: translation invariance and affinity}, it follows that $\varphi$ is translation invariant along $\Span(\cS)$. In particular, \eqref{representation affine piece} holds also for every $Z\in\Span(\cS)$.

\smallskip

Take now $Z\in\cM$ and let $(Z_\alpha)$ be a net in $\Span(\cS)$ converging to $Z$ and $Y\in\dom(\varphi^\ast)$. Then,
\[
\varphi(Z)\le \liminf_{\alpha}\varphi(Z_\alpha)=\liminf_{\alpha}\E_\probp[Z_\alpha Y] +\varphi(0)=\E_\probp[ZY] +\varphi(0)
\]
by lower semicontinuity at $Z$. Using translation invariance along $\Span(\cS)$ we have for every $\alpha$
\[
\varphi(Z)=\varphi(Z-Z_\alpha)+\varphi(Z_\alpha)-\varphi(0)=\varphi(Z-Z_\alpha)+\E_\probp[Z_\alpha Y].
\]
Hence, by lower semicontinuity at $0$, we easily obtain
\[
\varphi(Z) = \liminf_{\alpha}\E_\probp[Z_\alpha Y]+\liminf_{\alpha}\varphi(Z-Z_\alpha)
\ge \E_\probp[ZY]+\varphi(0).
\]
It follows that $\varphi(Z)=\E_\probp[ZY] +\varphi(0)$ for every $Z\in\cM$. In particular, $\varphi$ is affine on $\cM$. To conclude the proof we may apply what we have showed so far to $\cM$ instead of $\cS$.
\end{proof}

\smallskip

A direct consequence of the preceding result is that when the functional is affine on a set whose linear span is $\sigma(\cX,\cX^\ast)$-dense in $\cX$, it must be affine on the entire space. Its linear part is thus represented by a unique dual element in $\cX^\ast$.

\begin{corollary}
\label{cor: affinity implies translation invariance}
Let $\varphi:\cX\to(-\infty,\infty]$ be proper, convex, and $\sigma(\cX,\cX^\ast)$-lower semicontinuous and $\cS\subset\cX$ such that $\Span(\cS)$ is $\sigma(\cX,\cX^\ast)$-dense in $\cX$. If $\varphi$ is affine along every element of $\cS$, then $\varphi$ is affine on $\cX$ and there exists a unique $Y\in\cX^\ast$ such that for every $X\in\cX$
\[
\varphi(X) = \E_\probp[XY]+\varphi(0).
\]
\end{corollary}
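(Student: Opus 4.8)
The plan is to derive this as a straightforward consequence of Theorem~\ref{theo: affinity implies translation invariance}. The crucial point is that the density hypothesis collapses the set $\cM$ appearing there onto the whole space: since $\Span(\cS)$ is $\sigma(\cX,\cX^\ast)$-dense in $\cX$, its $\sigma(\cX,\cX^\ast)$-closure $\cM$ equals $\cX$. Applying Theorem~\ref{theo: affinity implies translation invariance} then tells us at once that $\varphi$ is translation invariant along all of $\cX$, and Remark~\ref{rem: translation invariance and affinity} records that this is precisely affinity on $\cX$, which is the first assertion.

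For the representation, I would invoke the ``moreover'' clause of Theorem~\ref{theo: affinity implies translation invariance}: for every $Y\in\dom(\varphi^\ast)$ and every $Z\in\cM=\cX$ one has $\varphi(Z)=\E_\probp[ZY]+\varphi(0)$. Thus any element of $\dom(\varphi^\ast)$ already serves as the representing $Y$, and the only thing to verify is that $\dom(\varphi^\ast)$ is nonempty. This is where properness enters: since $\varphi$ is proper, convex, and $\sigma(\cX,\cX^\ast)$-lower semicontinuous, Proposition~\ref{prop: fenchel moreau} gives $\varphi(X)=\sup_{Y\in\cX^\ast}\{\E_\probp[XY]-\varphi^\ast(Y)\}$; were $\dom(\varphi^\ast)$ empty, the supremum would equal $-\infty$ for every $X$, contradicting the existence of a point in $\dom(\varphi)$. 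Hence I may fix any $Y\in\dom(\varphi^\ast)$ and obtain the desired identity on all of $\cX$.

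Finally, for uniqueness, suppose $Y_1,Y_2\in\cX^\ast$ both represent $\varphi$. Subtracting the two representations yields $\E_\probp[X(Y_1-Y_2)]=0$ for every $X\in\cX$. Because $(\cX,\cX^\ast)$ is a dual pair under Assumption~\ref{assumption one}---for every nonzero element of $\cX^\ast$ there is an $X\in\cX$ pairing nontrivially with it---this forces $Y_1-Y_2=0$. I do not expect any genuine obstacle here, since the entire content is already carried by Theorem~\ref{theo: affinity implies translation invariance}; the only mild subtlety is ensuring $\dom(\varphi^\ast)\neq\emptyset$, so that the abstract representation actually produces a concrete dual element rather than an empty supremum.
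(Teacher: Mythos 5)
Your proposal is correct and follows exactly the route the paper intends: the corollary is stated as a direct consequence of Theorem~\ref{theo: affinity implies translation invariance} with $\cM=\cX$, and your additional checks (nonemptiness of $\dom(\varphi^\ast)$ via Proposition~\ref{prop: fenchel moreau} and uniqueness via the dual-pair separation property from Assumption~\ref{assumption one}) are precisely the details the paper leaves implicit.
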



\section{Collapse to the mean}

Throughout this section, we assume that $(\cX,\cX^\ast)$ is a pair satisfying Assumption~\ref{assumption one}. We establish our main result on the ``collapse to the mean'' of convex law-invariant functionals. We start by recalling a well-known result about ``law-invariance equivalence classes''. Here, for every random variable $X\in L^0$ we denote by $q_X$ a fixed quantile function of $X$, i.e.\ a function $q_X:(0,1)\to\R$ satisfying for every $\alpha\in(0,1)$
\[
\inf\{m\in\R \,; \ \probp(X\leq m)\geq\alpha\} \leq q_X(\alpha) \leq \inf\{m\in\R \,; \ \probp(X\leq m)>\alpha\}.
\]

\smallskip

\begin{lemma}
\label{lem: law invariance classes}
For all $X\in\cX$ and $Y\in\cX^\ast$ the set $\cE(X,Y)=\{\E_\probp[X'Y] \,; \ X'\in\cX, \ X'\sim X\}$ is a closed interval such that:
\begin{enumerate}[(i)]
  \item $\inf\cE(X,Y)=\int_0^1q_X(\alpha)q_Y(1-\alpha)d\alpha$.
  \item $\sup\cE(X,Y)=\int_0^1q_X(\alpha)q_Y(\alpha)d\alpha$.
  \item $\cE(X,Y)=\{\E_\probp[XY'] \,; \ Y'\in\cX^\ast, \ Y'\sim Y\}$.
\end{enumerate}
Moreover, $\cE(X,Y)$ is reduced to a singleton if and only if either $X$ or $Y$ is constant.
\end{lemma}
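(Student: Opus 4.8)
The plan is to pass to the quantile (rearrangement) picture and then establish, in order, the two bounds, their attainment, the interval property, the symmetry claim (iii), and the singleton characterisation.

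First I would fix the standard quantile representation. Since $\probp$ is nonatomic, there is a random variable $U$, uniformly distributed on $(0,1)$, with $Y=q_Y(U)$ almost surely; on the atoms of the law of $Y$ one exploits the nonatomicity of $\probp$ restricted to the corresponding positive-probability event, so that no independent randomisation is required. For any measurable measure-preserving map $\tau\colon(0,1)\to(0,1)$ the random variable $X'=q_X(\tau(U))$ satisfies $X'\sim X$, hence $X'\in\cX$ by law invariance, and
\[
\E_\probp[X'Y]=\int_0^1 q_X(\tau(u))\,q_Y(u)\,du .
\]
Taking $\tau=\mathrm{id}$ and $\tau(u)=1-u$ yields the comonotonic and countermonotonic couplings $X^+=q_X(U)$ and $X^-=q_X(1-U)$, which realise the values $\int_0^1 q_X(\alpha)q_Y(\alpha)\,d\alpha$ and $\int_0^1 q_X(\alpha)q_Y(1-\alpha)\,d\alpha$; both are finite because $X^\pm Y\in L^1$ by Assumption~\ref{assumption one}, which also shows that $q_Xq_Y$ and $q_X(1-\cdot)q_Y$ are genuinely integrable on $(0,1)$. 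That these are the extreme values of $\cE(X,Y)$, i.e.\ that $\int_0^1 q_X(\alpha)q_Y(1-\alpha)\,d\alpha\le\E_\probp[X'Y]\le\int_0^1 q_X(\alpha)q_Y(\alpha)\,d\alpha$ for every $X'\sim X$, is exactly the Hardy--Littlewood rearrangement inequality, which I would invoke (it depends on $X'$ only through $q_{X'}=q_X$, hence holds uniformly over the orbit). This gives (i) and (ii) once the gap is filled.

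To obtain the interval property I would join the two extreme couplings by a one-parameter family of measure-preserving maps and apply the intermediate value theorem. For $t\in[0,1]$ let $\tau_t$ be the \emph{sliding reflection}: the identity on $(0,t)$ and $u\mapsto 1+t-u$ on $(t,1)$; each $\tau_t$ is measure-preserving with $\tau_1=\mathrm{id}$ and $\tau_0(u)=1-u$. Setting $X_t'=q_X(\tau_t(U))\sim X$ produces a path with
\[
g(t):=\E_\probp[X_t'Y]=\int_0^t q_X(u)q_Y(u)\,du+\int_0^1 q_X(1-s)\,\tilde q_Y(t+s)\,ds ,
\]
where $\tilde q_Y$ is $q_Y$ extended by $0$ outside $(0,1)$, so that $g(0)=\int_0^1 q_X(\alpha)q_Y(1-\alpha)\,d\alpha$ and $g(1)=\int_0^1 q_X(\alpha)q_Y(\alpha)\,d\alpha$. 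The crux of the lemma, and the step I expect to be the main obstacle, is the continuity of $g$: since the quantiles are unbounded, a naive dominated-convergence argument fails. I would handle it by splitting the second integral at $s\in\{\e,1-\e\}$: on the bulk $[\e,1-\e]$ the weight $q_X(1-\cdot)$ is bounded and continuity reduces to continuity of translation in $L^1(\R)$ applied to $\tilde q_Y$; near $s=1$ the integrand vanishes as soon as $t+s>1$ because $\tilde q_Y$ is supported in $(0,1)$; and near $s=0$ the factor $q_Y(t+s)$ stays bounded while $\int_0^\e|q_X(1-s)|\,ds=\int_{1-\e}^1|q_X|\to0$ by absolute continuity of the integral, uniformly for $t$ in a neighbourhood. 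Granting continuity (with the stated boundary values), the intermediate value theorem shows $g$ attains every value between $g(0)$ and $g(1)$; together with attainment of the endpoints this yields that $\cE(X,Y)$ equals the closed interval $[\int_0^1 q_X(\alpha)q_Y(1-\alpha)\,d\alpha,\ \int_0^1 q_X(\alpha)q_Y(\alpha)\,d\alpha]$, proving the interval statement and (i), (ii).

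Finally I would settle (iii) and the singleton claim. Assumption~\ref{assumption one} is symmetric in $\cX$ and $\cX^\ast$, so the argument above with the roles of $X$ and $Y$ interchanged shows that $\{\E_\probp[XY']\,;\ Y'\sim Y\}$ is the closed interval with endpoints $\int_0^1 q_Y(\alpha)q_X(1-\alpha)\,d\alpha$ and $\int_0^1 q_Y(\alpha)q_X(\alpha)\,d\alpha$; the substitution $\alpha\mapsto 1-\alpha$ matches these with the endpoints of $\cE(X,Y)$, so the two sets coincide, giving (iii). For the singleton claim, writing $a(u)=q_X(u)-q_X(1-u)$ and $b(u)=q_Y(u)-q_Y(1-u)$ and averaging over $U\leftrightarrow 1-U$ gives
\[
\sup\cE(X,Y)-\inf\cE(X,Y)=\tfrac12\int_0^1 a(u)\,b(u)\,du .
\]
Both $a$ and $b$ are nondecreasing and antisymmetric about $\tfrac12$, so each has the sign of $u-\tfrac12$ and its zero-set is a symmetric interval, $[s,1-s]$ for $a$ and $[s',1-s']$ for $b$; hence $a(u)b(u)\ge0$ everywhere and the integral vanishes iff $\{a\neq0\}\cap\{b\neq0\}$ is null, i.e.\ iff $\min(s,s')=0$, i.e.\ iff $q_X$ or $q_Y$ is a.e.\ constant, i.e.\ iff $X$ or $Y$ is constant. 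The converse is immediate: if $X$ or $Y$ is constant, then $\E_\probp[X'Y]=\E_\probp[X]\,\E_\probp[Y]$ for every $X'\sim X$, so $\cE(X,Y)$ is a singleton. This completes the plan.
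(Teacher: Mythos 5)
Your route is genuinely different from the paper's: the paper does not prove the closed-interval property and items (i)--(iii) at all, but cites Theorem~9.1 of Luxemburg and the companion paper of Bellini et al.\ for them, and only argues the singleton characterisation directly. Your treatment of that last part is correct and is essentially the paper's own computation (the paper folds the integral onto $(0,1/2)$ where you keep $(0,1)$ and divide by $2$; the monotonicity argument on the zero sets of $a$ and $b$ is the same as the paper's argument that $q_Y(\alpha)=q_Y(1-\alpha)$ on a left interval forces $q_Y$ constant). Your derivation of the endpoints from the Hardy--Littlewood inequality, and of (iii) from the symmetry of Assumption~\ref{assumption one}, is also sound.

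The gap is in the continuity of $g$, which you correctly single out as the crux. The bulk estimate on $[\e,1-\e]$ is fine, but both boundary estimates fail precisely at the parameter values you need, namely $t_0=0$ and $t_0=1$. Near $s=0$ you assert that $q_Y(t+s)$ ``stays bounded \dots uniformly for $t$ in a neighbourhood'': this is true for $t_0\in(0,1)$, but at $t_0=0$ the argument $t+s$ approaches $0$ and $q_Y(t+s)$ is the (possibly unbounded) lower tail of $Y$, while at $t_0=1$ it approaches $1$ and is the upper tail; in both cases the companion factor $q_X(1-s)$ is the unbounded upper tail of $X$, so the integrand is not dominated by $C|q_X(1-s)|$. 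Likewise, near $s=1$ the remark that the integrand vanishes once $t+s>1$ disposes of that region only for $t\geq\e$, hence not at $t_0=0$. Without one-sided continuity at $t=0$ and $t=1$ the intermediate value theorem only yields the values between $\lim_{t\to0^+}g(t)$ and $\lim_{t\to1^-}g(t)$, and the two segments adjoining the Hardy--Littlewood extremes --- the very endpoints computed in (i) and (ii) --- are not covered. The step is repairable: first check that $\int_0^1 q_{|X|}(\alpha)q_{|Y|}(\alpha)\,d\alpha<\infty$ (the large values of $|X|$ come from the two tails of $X$, so this reduces to the integrability of the comonotonic and countermonotonic couplings, which you already have); then the Hardy--Littlewood inequality gives, uniformly over all measure-preserving $\tau$ and all measurable $A\subset(0,1)$,
\[
\int_A\bigl|q_X(\tau(u))\,q_Y(u)\bigr|\,du\;\leq\;\int_{1-|A|}^1 q_{|X|}(\alpha)\,q_{|Y|}(\alpha)\,d\alpha,
\]
which makes both boundary contributions uniformly small in $t$ and closes the argument. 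As written, however, the continuity claim --- and with it the interval property, hence also (iii) --- is not established.
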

\begin{proof}
It can be proved along the lines of Theorem 9.1 in Luxemburg~\cite{Luxemburg1967} that $\cE(X,Y)$ is a closed interval satisfying assertions {\em (i)} to {\em (iii)}. We refer to Bellini et al.\ \cite{BelliniKochMunariSvindland2020} for a detailed proof. The ``if'' implication in the last assertion is clear. To establish the ``only if'' implication, assume that $\cE(X,Y)$ is reduced to a singleton. In this case, we must have
\begin{align*}
0
&=
\int_0^1q_X(\alpha)q_Y(\alpha)d\alpha-\int_0^1q_X(\alpha)q_Y(1-\alpha)d\alpha \\
&=
\int_0^{1/2}q_X(\alpha)[q_Y(\alpha)-q_Y(1-\alpha)]d\alpha+
\int_{1/2}^1q_X(\alpha)[q_Y(\alpha)-q_Y(1-\alpha)]d\alpha \\
&=
\int_0^{1/2}[q_X(\alpha)-q_X(1-\alpha)][q_Y(\alpha)-q_Y(1-\alpha)]d\alpha.
\end{align*}
Now, assume that either $X$ or $Y$ is not constant. Upon exchanging their roles, we can assume without loss of generality that $X$ is not constant. Then, we find $\beta\in(0,1/2)$ such that $q_X(\alpha)-q_X(1-\alpha)<0$ for almost every $\alpha\in(0,\beta]$. Hence, the above identity can only hold if $q_Y(\alpha)=q_Y(1-\alpha)$ for almost every $\alpha\in(0,\beta]$. Being nondecreasing, $q_Y$ must therefore be almost-surely constant so that $Y$ has to be constant. This delivers the desired implication.
\end{proof}

\smallskip

Recall that, by definition of the topology $\sigma(\cX,\cX^\ast)$, every linear and $\sigma(\cX,\cX^\ast)$-continuous functional $\varphi:\cX\to\R$ can be represented by a suitable $Y\in\cX^\ast$ through the identity $\varphi(X)=\E_\probp[XY]$ for every $X\in\cX$. Hence, it is an immediate consequence of the preceding lemma that any linear and $\sigma(\cX,\cX^\ast)$-continuous functional that is law invariant must ``collapse to the mean''.

\begin{proposition}
\label{prop: collapse under global linearity}
Let $\cM$ be a law-invariant linear subspace of $\cX$ containing a nonconstant random variable. Let $Y\in\cX^\ast$ and consider the linear functional $\varphi:\cM\to\R$ given by $\varphi(X)=\E_\probp[XY]$. The following statements are equivalent:
\begin{enumerate}[(a)]
  \item $\varphi$ is law invariant.
  \item $Y$ is constant.
\end{enumerate}
\end{proposition}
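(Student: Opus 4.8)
The plan is to prove the two implications separately, with the content residing entirely in ``(a) $\implies$ (b)'' since the reverse is immediate.

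\textbf{The easy direction.} Suppose $Y$ is constant, say $Y=c$ for some $c\in\R$. Then $\varphi(X)=\E_\probp[cX]=c\,\E_\probp[X]$ for every $X\in\cM$. Since the expectation functional is law invariant (equidistributed random variables have identical expectations), so is any scalar multiple of it, and hence $\varphi$ is law invariant. This gives ``(b) $\implies$ (a)''.

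\textbf{The main direction.} For ``(a) $\implies$ (b)'' I would argue by contraposition: assuming $Y$ is nonconstant, I will produce two equidistributed elements of $\cM$ on which $\varphi$ takes different values, contradicting law invariance. The key is Lemma~\ref{lem: law invariance classes}. By hypothesis $\cM$ contains a nonconstant random variable $X$. Both $X$ and $Y$ are then nonconstant, so by the final assertion of the lemma the set $\cE(X,Y)=\{\E_\probp[X'Y] \,;\ X'\in\cX,\ X'\sim X\}$ is a nondegenerate closed interval; in particular it contains at least two distinct points. The delicate point is that the lemma a priori ranges $X'$ over all of $\cX$, whereas I need the competing random variable to lie in $\cM$ so that $\varphi$ is defined on it. This is where law invariance of the subspace $\cM$ enters: since $\cM$ is law invariant and $X\in\cM$, every $X'\in\cX$ with $X'\sim X$ automatically belongs to $\cM$. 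Thus the supremum and infimum of $\cE(X,Y)$ are both attained (or approached) by elements $X'\in\cM$ with $X'\sim X$, and these two values of $\E_\probp[X'Y]=\varphi(X')$ differ.

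\textbf{Assembling the contradiction.} Concretely, I would pick $X_1,X_2\in\cM$ with $X_1\sim X\sim X_2$ realizing (or approximating) $\inf\cE(X,Y)$ and $\sup\cE(X,Y)$ respectively. Since $X$ and $Y$ are both nonconstant, the lemma guarantees $\inf\cE(X,Y)<\sup\cE(X,Y)$, so $\varphi(X_1)=\E_\probp[X_1Y]\neq\E_\probp[X_2Y]=\varphi(X_2)$ even though $X_1\sim X_2$. This violates law invariance of $\varphi$, completing the contraposition.

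\textbf{Anticipated obstacle.} The only genuine subtlety is ensuring that distinct values of $\E_\probp[X'Y]$ are realized by honest elements of $\cM$ rather than merely approached as a supremum/infimum. Because $\cE(X,Y)$ is stated to be a \emph{closed} interval and the lemma describes it as exactly the image $\{\E_\probp[X'Y] \,;\ X'\sim X\}$, the endpoints are attained, so no approximation argument is needed; the law invariance of $\cM$ then places the attaining variables inside $\cM$. I expect the bookkeeping around ``nonconstant'' to require a small check — one must confirm that the hypothesis ``$\cM$ contains a nonconstant random variable'' is exactly what is needed to invoke the nondegeneracy clause of the lemma, and that the roles of $X$ and $Y$ in that clause are correctly matched to the present setup.
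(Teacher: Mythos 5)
Your proof is correct and follows essentially the same route as the paper, which justifies the proposition as an immediate consequence of Lemma~\ref{lem: law invariance classes}: since $\cM$ is law invariant, law invariance of $\varphi$ forces $\cE(X,Y)$ to be a singleton for a nonconstant $X\in\cM$, whence $Y$ is constant. (The paper's subsequent remark also records an alternative elementary argument via indicator functions, but that is presented only as an aside.)
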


\smallskip

\begin{remark}
Of course, the preceding proposition could be proved directly. It is trivial that {\em (b)} implies {\em (a)}. To see that {\em (a)} implies {\em (b)}, assume $Y$ is not constant so that we find $\alpha\in\R$ satisfying $\probp(Y<\alpha)>0$ as well as $\probp(Y>\alpha)>0$. By nonatomicity, there exist measurable sets $E\subset\{Y<\alpha\}$ and $F\subset\{Y>\alpha\}$ such that $\probp(E)=\probp(F)>0$. Setting $X_1=\one_E$ and $X_2=\one_F$ we see that $X_1$ and $X_2$ belong to $\cX$ and satisfy $X_1\sim X_2$ and $\varphi(X_1) <  \alpha \probp(E)=\alpha\probp(F)< \varphi(X_2)$. This shows that $\varphi$ is not law invariant.
\end{remark}

\smallskip

We now use Lemma~\ref{lem: law invariance classes} to prove that the linear space generated by all the random variables having the same distribution as a given nonconstant random variable with nonzero expectation is $\sigma(\cX,\cX^\ast)$-dense in the space $\cX$. For any random variable $X\in\cX$ set
\[
\cL_X : =\{X'\in\cX \,; \ X'\sim X\}.
\]

\smallskip

\begin{lemma}
\label{lem: density}
For every nonconstant $Z\in\cX$ the following statements hold:
\begin{enumerate}[(i)]
\item If $\E_\probp[Z]\neq0$, then $\Span(\cL_Z)$ is $\sigma(\cX,\cX^\ast)$-dense in $\cX$.
\item If $\E_\probp[Z]=0$, then the $\sigma(\cX,\cX^\ast)$-closure of $\Span(\cL_Z)$ coincides with the set $\{X\in\cX \,; \ \E_\probp[X]=0\}$.
\end{enumerate}
\end{lemma}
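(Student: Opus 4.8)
The plan is to deduce both statements from a single computation of the annihilator of the subspace $\mathcal{N}:=\Span(\cL_Z)$ in the dual pair $(\cX,\cX^\ast)$, combined with the bipolar theorem for subspaces. Recall that $(\cX,\cX^\ast)$ is a dual pair (as noted after Assumption~\ref{assumption one}), so the $\sigma(\cX,\cX^\ast)$-closure of any linear subspace $\mathcal{N}$ equals the pre-annihilator of its annihilator, that is, $\cl(\mathcal{N})=\{X\in\cX \,;\ \E_\probp[XY]=0 \text{ for all } Y\in\mathcal{N}^\perp\}$, where $\mathcal{N}^\perp:=\{Y\in\cX^\ast \,;\ \E_\probp[XY]=0 \text{ for all } X\in\mathcal{N}\}$. (Note $\cL_Z\subset\cX$ by law invariance of $\cX$, so $\mathcal{N}$ is a genuine subspace of $\cX$.) Everything therefore reduces to identifying $\mathcal{N}^\perp$.

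First I would observe that, since the defining condition is linear in $X$, one has $Y\in\mathcal{N}^\perp$ precisely when $\E_\probp[X'Y]=0$ for every $X'\in\cL_Z$, i.e.\ exactly when the set $\cE(Z,Y)$ of Lemma~\ref{lem: law invariance classes} is the singleton $\{0\}$. By the final assertion of that lemma together with the nonconstancy of $Z$, this forces $Y$ to be constant, say $Y\equiv c$; and for a constant $c$ we have $\cE(Z,c)=\{c\,\E_\probp[Z]\}$, so this singleton equals $\{0\}$ if and only if $c\,\E_\probp[Z]=0$. This step is the crux of the argument and the only place where law invariance is used, entering through Lemma~\ref{lem: law invariance classes}.

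With this in hand, both parts follow immediately. For part (i), where $\E_\probp[Z]\neq0$, the condition $c\,\E_\probp[Z]=0$ gives $c=0$, hence $\mathcal{N}^\perp=\{0\}$, and the bipolar identity yields $\cl(\mathcal{N})=\cX$, i.e.\ $\Span(\cL_Z)$ is $\sigma(\cX,\cX^\ast)$-dense. For part (ii), where $\E_\probp[Z]=0$, every constant lies in $\mathcal{N}^\perp$ and, by the computation above, nothing else does, so $\mathcal{N}^\perp$ is exactly the line of constants (which belongs to $\cX^\ast$ since $L^\infty\subset\cX^\ast$). Its pre-annihilator is $\{X\in\cX \,;\ \E_\probp[X\cdot c]=0 \text{ for all } c\in\R\}=\{X\in\cX \,;\ \E_\probp[X]=0\}$, which therefore coincides with $\cl(\Span(\cL_Z))$.

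I do not anticipate a serious obstacle: the substantive content is already packaged in Lemma~\ref{lem: law invariance classes}, and the remainder is routine duality. The only points requiring care are the correct invocation of the bipolar theorem for subspaces in the locally convex Hausdorff space $(\cX,\sigma(\cX,\cX^\ast))$, and the observation that $X\mapsto\E_\probp[X]=\E_\probp[X\cdot\one]$ is $\sigma(\cX,\cX^\ast)$-continuous because $\one\in L^\infty\subset\cX^\ast$. The latter also shows directly that $\{X\in\cX \,;\ \E_\probp[X]=0\}$ is $\sigma(\cX,\cX^\ast)$-closed and contains $\Span(\cL_Z)$ (as every $X'\sim Z$ satisfies $\E_\probp[X']=\E_\probp[Z]=0$), which provides a useful sanity check on the inclusion obtained in part (ii).
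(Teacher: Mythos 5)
Your proposal is correct and follows essentially the same route as the paper's own proof: identify the annihilator of $\Span(\cL_Z)$ by using the last assertion of Lemma~\ref{lem: law invariance classes} to force any annihilating $Y$ to be constant (and zero when $\E_\probp[Z]\neq0$), then conclude via the double-annihilator/bipolar theorem for subspaces of the dual pair $(\cX,\cX^\ast)$. The paper cites Theorem~5.107 and Corollary~5.108 in Aliprantis and Border for exactly the duality facts you invoke, so there is nothing substantive to add.
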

\begin{proof}
Let $\cM$ be the $\sigma(\cX,\cX^\ast)$-closure of $\Span(\cL_Z)$. The annihilator of the set $\cM$ is defined by
\[
\cM^\perp:=\{Y\in\cX^\ast \,; \ \forall X\in\cM, \ \E_\probp[XY]=0\}.
\]
Similarly, the annihilator of the set $\cM^\perp$ is given by
\[
\cM^{\perp\perp}:=\{X\in\cX \,; \ \forall Y\in\cM^\perp, \ \E_\probp[XY]=0\}.
\]
Take an arbitrary $Y\in\cM^\perp$. Since $Z$ is not constant and $\{\E_\probp[ZY'] \,; \ Y'\in\cL_Y\} = \{\E_\probp[Z'Y] \,; \ Z'\in\cL_Z\} = \{0\}$ by Lemma~\ref{lem: law invariance classes}, it follows from the same result that $Y$ must be constant. If $\E_\probp[Z]\neq0$, then we must have $Y=0$. In this case, $\cM^\perp=\{0\}$ and it follows from Corollary~5.108 in Aliprantis and Border~\cite{AliprantisBorder2006} that {\em (i)} holds. If $\E_\probp[Z]=0$, then we must have $\cM^\perp=\R$. This implies that $\cM^{\perp\perp}=\{X\in\cX \,; \ \E_\probp[X]=0\}$. Since $\cM=\cM^{\perp\perp}$ by Theorem~5.107 in Aliprantis and Border~\cite{AliprantisBorder2006}, we infer that {\em (ii)} holds.
\end{proof}


\subsubsection*{Affinity along a nonconstant random variable with nonzero expectation}

By combining the previous results we can now easily establish our main result.

\begin{theorem}
\label{theo: collapse}
For a proper, convex, $\sigma(\cX,\cX^\ast)$-lower semicontinuous, law-invariant functional $\varphi:\cX\to(-\infty,\infty]$ the following statements are equivalent:
\begin{enumerate}[(a)]
    \item The functional $\varphi$ is affine along a nonconstant $Z\in\cX$ with $\E_\probp[Z]\neq0$.
    \item The functional $\varphi$ is translation invariant along a nonconstant $Z\in\cX$ with $\E_\probp[Z]\neq0$.
	\item There exists $a\in\R$ such that $\varphi(X)=a\E_\probp[X]+\varphi(0)$ for every $X\in\cX$.
\end{enumerate}
\end{theorem}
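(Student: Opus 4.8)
The plan is to establish the cycle of implications $(c)\Rightarrow(b)\Rightarrow(a)\Rightarrow(c)$, with essentially all of the work concentrated in the last step. The implication $(b)\Rightarrow(a)$ is immediate, since translation invariance along $Z$ entails affinity along $Z$ by definition. For $(c)\Rightarrow(b)$ I would first observe that a nonconstant $Z\in\cX$ with $\E_\probp[Z]\neq0$ exists: since $\cX\supset L^\infty$ and the probability space is nonatomic, one may take $Z=\one_E$ for any event $E$ with $0<\probp(E)<1$. Assuming the representation in $(c)$, a direct computation gives $\varphi(X+mZ)=a\E_\probp[X+mZ]+\varphi(0)=\varphi(X)+am\E_\probp[Z]$ for all $X\in\cX$ and $m\in\R$, so $\varphi$ is translation invariant along this $Z$.

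The heart of the argument is $(a)\Rightarrow(c)$. Suppose $\varphi$ is affine along a nonconstant $Z$ with $\E_\probp[Z]\neq0$, say $\varphi(mZ)=am+\varphi(0)$ for all $m\in\R$. The key observation is that law invariance propagates this affinity to every element of the distribution class $\cL_Z$: if $Z'\sim Z$, then $mZ'\sim mZ$ for each fixed $m\in\R$, so law invariance yields $\varphi(mZ')=\varphi(mZ)=am+\varphi(0)$, i.e.\ $\varphi$ is affine along $Z'$ as well (and $mZ'\in\dom(\varphi)$). Hence $\varphi$ is affine along every element of $\cL_Z$.

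At this point I would invoke the structural results of the previous sections. By Lemma~\ref{lem: density}\,(i), the condition $\E_\probp[Z]\neq0$ guarantees that $\Span(\cL_Z)$ is $\sigma(\cX,\cX^\ast)$-dense in $\cX$. Applying Corollary~\ref{cor: affinity implies translation invariance} with $\cS=\cL_Z$ then shows that $\varphi$ is affine on all of $\cX$ and delivers a unique $Y\in\cX^\ast$ with $\varphi(X)=\E_\probp[XY]+\varphi(0)$ for every $X\in\cX$. It remains to identify $Y$ as a constant. To this end I would note that the linear functional $X\mapsto\varphi(X)-\varphi(0)=\E_\probp[XY]$ on $\cX$ inherits law invariance from $\varphi$; since $\cX$ is a law-invariant subspace containing the nonconstant $Z$, Proposition~\ref{prop: collapse under global linearity} forces $Y$ to be constant, say $Y=a$. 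Substituting back gives $\varphi(X)=a\E_\probp[X]+\varphi(0)$, which is exactly $(c)$.

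The main obstacle is conceptual rather than technical: recognizing that law invariance turns local affinity along a single direction $Z$ into affinity along the entire---and, crucially, $\sigma(\cX,\cX^\ast)$-dense---family $\cL_Z$. Once the density in Lemma~\ref{lem: density}\,(i) is available, the lower-semicontinuity machinery of Corollary~\ref{cor: affinity implies translation invariance} and the rigidity of law-invariant continuous linear functionals in Proposition~\ref{prop: collapse under global linearity} do the rest; the role of the hypothesis $\E_\probp[Z]\neq0$ is precisely to secure this density (in the zero-mean case of Lemma~\ref{lem: density}\,(ii) the span is only dense in the mean-zero hyperplane, and the collapse need not occur).
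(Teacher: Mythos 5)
Your proof is correct and follows essentially the same route as the paper's: law invariance upgrades affinity along $Z$ to affinity along all of $\cL_Z$, Lemma~\ref{lem: density}\,(i) gives density of $\Span(\cL_Z)$, Corollary~\ref{cor: affinity implies translation invariance} yields the representation $\varphi(X)=\E_\probp[XY]+\varphi(0)$, and Proposition~\ref{prop: collapse under global linearity} forces $Y$ to be constant. The only cosmetic difference is that you close the loop via the cycle $(c)\Rightarrow(b)\Rightarrow(a)\Rightarrow(c)$, whereas the paper gets $(a)\Leftrightarrow(b)$ directly from Theorem~\ref{theo: affinity implies translation invariance}; both are fine.
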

\begin{proof}
It follows from Theorem~\ref{theo: affinity implies translation invariance} that {\em (a)} and {\em (b)} are equivalent. To conclude, we only have to show that {\em (a)} implies {\em (c)}. To this effect, assume that $\varphi$ is affine along a nonconstant random variable $Z\in\cX$ with $\E_\probp[Z]\neq0$. Note that, by Lemma~\ref{lem: density}, the $\sigma(\cX,\cX^\ast)$-closure of $\Span(\cL_Z)$ is $\cX$. Noting that, by law invariance, $\varphi$ is affine along each element of $\cL_Z$, we can apply Corollary~\ref{cor: affinity implies translation invariance} to obtain that
\[
\varphi(X)=\E_\probp[XY]+\varphi(0)
\]
for all $X\in\cX$ and $Y\in\dom(\varphi^\ast)$. It now suffices to apply Proposition~\ref{prop: collapse under global linearity} to the functional $\varphi-\varphi(0)$ to infer that $Y$ must be constant and conclude the proof.
\end{proof}

\smallskip

\begin{remark}
We show that lower semicontinuity is necessary for the above ``collapse to the mean'' to hold. Let $\cA=\{X\in L^1 \,; \ \mbox{$X$ has a discrete distribution}\}$ and define $\varphi:L^1\to(-\infty,\infty]$ by
\[
\varphi(X)=
\begin{cases}
0 & \mbox{if} \ X\in\cA,\\
\infty & \mbox{otherwise}.
\end{cases}
\]
It is clear that $\varphi$ is convex and law invariant. Moreover, for every event $E\in\cF$ with $\probp(E)\in(0,1)$ we have that $\varphi$ is linear (in fact, null) on the vector space spanned by the nonconstant random variable $Z=\one_E$. However, $\varphi$ fails to be $\sigma(L^1,L^\infty)$-lower semicontinuous. To see this, take a positive random variable $X\in\cX\setminus\cA$. Then, we can always find an increasing sequence $(X_n)\subset\cA$ such that $X_n\to X$ almost surely. It follows from the Dominated Convergence Theorem that $X_n\to X$ with respect to $\sigma(L^1,L^\infty)$ but
\[
\varphi(X) = \infty > 0 = \liminf_{n\to\infty}\varphi(X_n),
\]
showing that $\varphi$ is not $\sigma(L^1,L^\infty)$-lower semicontinuous.
\end{remark}


\subsubsection*{Affinity along a nonconstant random variable with zero expectation}

If the random variable along which a functional is affine has zero expectation, then the functional is simply the composition of a convex real function and the expectation functional.

\begin{theorem}
\label{theo: collapse zero expectation}
For a proper, convex, $\sigma(\cX,\cX^\ast)$-lower semicontinuous, law-invariant functional $\varphi:\cX\to(-\infty,\infty]$ the following statements are equivalent:
\begin{enumerate}[(a)]
    \item The functional $\varphi$ is affine along a nonconstant $Z\in\cX$ with $\E_\probp[Z]= 0$.
    \item The functional $\varphi$ is translation invariant along a nonconstant $Z\in\cX$ with $\E_\probp[Z]=0$.
    \item $\varphi(X)=\varphi(\E_\probp[X])$ for every $X\in\cX$.
\end{enumerate}
\end{theorem}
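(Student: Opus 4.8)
The plan is to follow the blueprint of the proof of Theorem~\ref{theo: collapse}, adapting it to the degenerate geometry forced by $\E_\probp[Z]=0$. The equivalence of \emph{(a)} and \emph{(b)} will again come essentially for free from Theorem~\ref{theo: affinity implies translation invariance}, and \emph{(c)} $\implies$ \emph{(a)} is immediate, so the whole difficulty is concentrated in \emph{(a)} $\implies$ \emph{(c)}. Throughout I write $\cM$ for the $\sigma(\cX,\cX^\ast)$-closure of $\Span(\cL_Z)$, and I intend to exploit that, in contrast to the nonzero-mean case, Lemma~\ref{lem: density}\emph{(ii)} now identifies $\cM$ not with all of $\cX$ but with the closed hyperplane $\cM=\{X\in\cX \,; \ \E_\probp[X]=0\}$.

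First I would dispose of the easy implications. For \emph{(c)} $\implies$ \emph{(a)}, nonatomicity furnishes an event $E$ with $\probp(E)\in(0,1)$, and $Z=\one_E-\probp(E)\in L^\infty\subset\cX$ is nonconstant with $\E_\probp[Z]=0$; then $\varphi(mZ)=\varphi(\E_\probp[mZ])=\varphi(0)$ for all $m\in\R$, so $\varphi$ is affine along $Z$ with vanishing slope. Since translation invariance implies affinity by definition, \emph{(b)} $\implies$ \emph{(a)} is trivial.

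For the remaining direction I first upgrade \emph{(a)} to translation invariance together with a dual representation on $\cM$. Assuming $\varphi$ is affine along a nonconstant $Z$ with $\E_\probp[Z]=0$, law invariance shows that $\varphi$ is affine along every element of $\cL_Z$, so Theorem~\ref{theo: affinity implies translation invariance} applied to $\cS=\cL_Z$ yields that $\varphi$ is translation invariant along $\cM$ and that $\varphi(W)=\E_\probp[WY]+\varphi(0)$ for all $W\in\cM$ and $Y\in\dom(\varphi^\ast)$. In particular \emph{(b)} holds (as $Z\in\cM$), and $W\mapsto\varphi(W)-\varphi(0)$ is a linear functional on $\cM$ represented by any $Y\in\dom(\varphi^\ast)$.

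The crux is then to show that this linear part vanishes on $\cM$. Here I would observe that $\cM$ is itself law invariant (the mean being a law invariant) and contains the nonconstant $Z$, while $W\mapsto\varphi(W)-\varphi(0)=\E_\probp[WY]$ is law invariant on $\cM$ by the law invariance of $\varphi$. Proposition~\ref{prop: collapse under global linearity}, applied to the subspace $\cM$, therefore forces $Y$ to be constant; since every $W\in\cM$ satisfies $\E_\probp[W]=0$, this gives $\E_\probp[WY]=0$, i.e.\ $\varphi(W)=\varphi(0)$ for all $W\in\cM$. Finally, decomposing an arbitrary $X\in\cX$ as $X=\E_\probp[X]+(X-\E_\probp[X])$ with $X-\E_\probp[X]\in\cM$ and invoking translation invariance along $\cM$ gives $\varphi(X)=\varphi(\E_\probp[X])+\varphi(X-\E_\probp[X])-\varphi(0)=\varphi(\E_\probp[X])$, which is \emph{(c)}. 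I expect the main obstacle to be precisely this step: unlike in Theorem~\ref{theo: collapse}, density no longer makes $\varphi$ affine on all of $\cX$, so one cannot argue globally. The resolution is to run the ``collapse'' Proposition~\ref{prop: collapse under global linearity} \emph{inside} the hyperplane $\cM$, where law invariance of the linear part still pins down the representing $Y$ as constant and thereby annihilates it against mean-zero variables.
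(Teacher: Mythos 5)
Your argument is essentially identical to the paper's proof: both use Theorem~\ref{theo: affinity implies translation invariance} with $\cS=\cL_Z$ to get translation invariance and the dual representation $\varphi(W)=\E_\probp[WY]+\varphi(0)$ on the mean-zero hyperplane $\cM$ identified by Lemma~\ref{lem: density}(ii), then invoke Proposition~\ref{prop: collapse under global linearity} on $\cM$ to force $Y$ constant, and finally conclude via the decomposition $X=\E_\probp[X]+(X-\E_\probp[X])$ and translation invariance along $\cM$. Your write-up is correct and, if anything, slightly more explicit than the paper about why Proposition~\ref{prop: collapse under global linearity} applies inside the hyperplane.
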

\begin{proof}
It follows from Theorem~\ref{theo: affinity implies translation invariance} that {\em (a)} and {\em (b)} are equivalent. To conclude, we only have to show that {\em (a)} implies {\em (c)}. Hence, assume that $\varphi$ is affine along a nonconstant $Z\in\cX$ with $\E_\probp[Z]=0$. Let $\cM=\{X\in\cX \,; \ \E_\probp[X]=0\}$, which by Lemma~\ref{lem: density} is the $\sigma(\cX,\cX^\ast)$-closure of $\Span(\cL_Z)$. By Theorem~\ref{theo: affinity implies translation invariance},
\[
\varphi(X)=\E_\probp[XY]+\varphi(0)
\]
for all $X\in\cM$ and $Y\in\dom(\varphi^\ast)$. It follows from Proposition~\ref{prop: collapse under global linearity} that $Y$ must be constant. Hence,
\[
\varphi(X) = \varphi(\E_\probp[X])+\varphi(X-\E_\probp[X])-\varphi(0) = \varphi(\E_\probp[X])+\varphi(0)-\varphi(0) = \varphi(\E_\probp[X])
\]
by translation invariance along $\cM$. This delivers the desired implication.
\end{proof}

\smallskip

Although, in general, there is no full ``collapse to the mean'' if the functional is affine along a direction with zero expectation, we do obtain a full ``collapse to the mean'' as soon as we additionally have translation invariant along constant random variables. This is a situation that is often encountered in applications.

\begin{corollary}
\label{cor: collapse}
For a proper, convex, $\sigma(\cX,\cX^\ast)$-lower semicontinuous, law-invariant functional $\varphi:\cX\to(-\infty,\infty]$ that is translation invariant along $1$ the following statements are equivalent:
\begin{enumerate}[(a)]
    \item The functional $\varphi$ is affine along a nonconstant $Z\in\cX$.
    \item The functional $\varphi$ is translation invariant along a nonconstant $Z\in\cX$.
	\item There exists $a\in\R$ such that $\varphi(X)=a\E_\probp[X]+\varphi(0)$ for every $X\in\cX$.
\end{enumerate}
\end{corollary}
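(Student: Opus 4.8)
The plan is to reduce everything to the two theorems already proved in this section, Theorem~\ref{theo: collapse} and Theorem~\ref{theo: collapse zero expectation}, using the extra hypothesis of translation invariance along $1$ only to close the gap left by the zero-expectation case. The equivalence of \emph{(a)} and \emph{(b)} needs no new argument: it follows from Theorem~\ref{theo: affinity implies translation invariance} exactly as in the proofs of the two cited theorems, since affinity along a single $Z$ yields translation invariance along $Z$ while the converse is immediate from the definitions. The implication \emph{(c)} $\implies$ \emph{(a)} is also immediate, because $X\mapsto a\E_\probp[X]+\varphi(0)$ is affine along every direction and nonconstant random variables exist in $\cX$ by nonatomicity. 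Hence the whole content of the corollary is the implication \emph{(a)} $\implies$ \emph{(c)}.

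For \emph{(a)} $\implies$ \emph{(c)} I would split according to the expectation of the direction $Z$. If $\E_\probp[Z]\neq0$, then $\varphi$ is affine along a nonconstant random variable with nonzero expectation, so Theorem~\ref{theo: collapse} yields \emph{(c)} at once, and translation invariance along $1$ is not even needed. If $\E_\probp[Z]=0$, then Theorem~\ref{theo: collapse zero expectation} gives only the weaker conclusion $\varphi(X)=\varphi(\E_\probp[X])$ for every $X\in\cX$: the functional depends on $X$ only through its mean but is not yet seen to be affine in that mean.

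This last point is precisely where translation invariance along $1$ enters, and it is the only step requiring any care. Writing $a:=\varphi(1)-\varphi(0)$, translation invariance along $1$ gives $\varphi(m)=am+\varphi(0)$ for every $m\in\R$; since $\E_\probp[X]$ is a constant, combining this with Theorem~\ref{theo: collapse zero expectation} yields
\[
\varphi(X)=\varphi(\E_\probp[X])=a\E_\probp[X]+\varphi(0)
\]
for every $X\in\cX$, which is \emph{(c)}. A case-free alternative is to observe that, from \emph{(b)} and translation invariance along $1$, $\varphi$ is translation invariant along $\Span(\{Z,1\})$ by Remark~\ref{rem: translation invariance and affinity}, hence affine along $Z+1$, which is nonconstant with $\E_\probp[Z+1]\neq0$ when $\E_\probp[Z]=0$, so Theorem~\ref{theo: collapse} applies directly to $Z+1$. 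I anticipate no real obstacle here: the structural work has been done in the preceding theorems, and the sole nontrivial observation is that translation invariance along $1$ supplies exactly the affine dependence on the mean that Theorem~\ref{theo: collapse zero expectation} leaves undetermined.
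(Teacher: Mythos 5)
Your proposal is correct and follows essentially the same route as the paper: split on whether $\E_\probp[Z]$ vanishes, invoke Theorem~\ref{theo: collapse} in the nonzero case, and in the zero case upgrade the conclusion $\varphi(X)=\varphi(\E_\probp[X])$ of Theorem~\ref{theo: collapse zero expectation} to an affine function of the mean via translation invariance along $1$. The case-free variant you sketch (passing to $Z+1$) is a minor stylistic alternative, not a genuinely different argument.
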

\begin{proof}
If $\E_\probp[Z]\neq 0$, then the equivalences follow from Theorem~\ref{theo: collapse}. If $\E_\probp[Z]=0$, it suffices to show that {\em (a)} implies {\em (c)} due to Theorem~\ref{theo: collapse zero expectation}. In this case, the same result implies that $\varphi(X)=\varphi(\E_\probp[X])$ for every $X\in\cX$ whenever {\em (a)} holds. Then, by translation invariance along $1$, there exists $a\in\R$ such that $\varphi(X)=\varphi(0)+a\E_\probp[X]$ for every $X\in\cX$.
\end{proof}


\section{Applications}

In this final section we point out connections to other works in the literature in which a ``collapse to the mean'' was established. We also highlight some applications of the ``collapse to the mean'' to pricing functionals and risk measures. Throughout the entire section we continue to denote by $(\cX,\cX^\ast)$ a pair satisfying Assumption~\ref{assumption one}.


\subsubsection*{Collapse to the mean in the literature}

We now show how to derive the known ``collapse to the mean'' results of the literature from our general results. We start with the results in Castagnoli et al. \cite{CastagnoliMaccheroniMarinacci2004} who focus on law-invariant Choquet integrals on $L^\infty$. Recall that a set function $c:\cF\to[0,1]$ is called a \textit{submodular capacity}\footnote{We prefer this terminology to ``submodular nonadditive probability'', which is used in Castagnoli et al. \cite{CastagnoliMaccheroniMarinacci2004}.} or if it satisfies the following conditions:
\begin{enumerate}[{\rm (1)}]
  \item $c(\Omega)=1$ and $c(E)=0$ for every $E\in\cF$ such that $\probp(E)=0$.
  \item $c(E)\leq c(F)$ for all $E,F\in\cF$ such that $E\subset F$.
  \item $c(E_n)\to 0$ for every decreasing sequence $(E_n)\subset\cF$ such that $\bigcap_{n\in\N} E_n=\emptyset$.
  \item $c(E\cup F)\leq c(E)+c(F)-c(E\cap F)$ for all $E,F\in\cF$.
\end{enumerate}
The Choquet integral associated to a submodular capacity $c$ is the functional $\E_c:L^\infty\to\R$ defined by
\[
\E_c[X] := \int_{-\infty}^0(c(X>x)-1)dx+\int_0^\infty c(X>x)dx.
\]
The ``collapse to the mean'' says that a Choquet integral associated with a submodular capacity $c$ reduces to the standard expectation under $\probp$ whenever it is law invariant under $\probp$ and linear along a nonconstant random variable.

\begin{theorem}[Theorem 3.1 in \cite{CastagnoliMaccheroniMarinacci2004}]
\label{theo: castagnoli et al}
Let $c$ be a submodular capacity. If $\E_c$ is law invariant under $\probp$ and $\E_c[-Z]=-\E_c[Z]$ for a nonconstant $Z\in L^\infty$, then $\E_c[X]=\E_\probp[X]$ for every $X\in L^\infty$ or equivalently $c=\probp$.
\end{theorem}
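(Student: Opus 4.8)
The plan is to realise $\E_c$ as a functional on the pair $(\cX,\cX^\ast)=(L^\infty,L^1)$, which satisfies Assumption~\ref{assumption one}, and to verify that $\E_c$ meets every structural hypothesis of Corollary~\ref{cor: collapse}. That $\E_c$ is proper is immediate, since it is real-valued on $L^\infty$, so that $\dom(\E_c)=L^\infty$. Convexity follows from the submodularity condition~(4): it is classical that submodularity of $c$ is equivalent to subadditivity of the Choquet integral, and since $\E_c$ is always positively homogeneous, it is in fact sublinear, hence convex. Law invariance of $\E_c$ is assumed. Moreover, $\E_c$ is translation invariant along $1$: because $c(\Omega)=1$ and any $X$ is comonotone with constants, $\E_c[X+m\one_\Omega]=\E_c[X]+m$ for every $m\in\R$. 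Finally, condition~(3) is precisely the continuity from above that forces the core of $c$ to consist of countably additive measures, all absolutely continuous with respect to $\probp$ by condition~(1); writing $\E_c$ as the supremum of the $\sigma(L^\infty,L^1)$-continuous functionals $X\mapsto\E_\probp[X\frac{dQ}{d\probp}]$ over this core exhibits it as $\sigma(L^\infty,L^1)$-lower semicontinuous. Equivalently, one may note that condition~(3) yields the Fatou property and invoke Example~\ref{example: fatou}.

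Next I would turn the hypothesis into an affinity statement. Positive homogeneity gives $\E_c[mZ]=m\E_c[Z]$ for $m\geq 0$, while for $m<0$ the assumption $\E_c[-Z]=-\E_c[Z]$ combined with positive homogeneity applied to $-Z$ yields $\E_c[mZ]=|m|\E_c[-Z]=m\E_c[Z]$; since $\E_c[0]=0$, this shows that $\E_c$ is affine along the nonconstant $Z$. Corollary~\ref{cor: collapse} now applies directly, and note that we do not even need to determine the sign of $\E_\probp[Z]$, precisely because translation invariance along $1$ is available. It yields $a\in\R$ with $\E_c[X]=a\E_\probp[X]+\E_c[0]=a\E_\probp[X]$ for every $X\in L^\infty$. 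Evaluating at $X=\one_\Omega$ gives $1=c(\Omega)=\E_c[\one_\Omega]=a$, so that $\E_c[X]=\E_\probp[X]$ for all $X$. Finally, testing against indicators, $c(E)=\E_c[\one_E]=\E_\probp[\one_E]=\probp(E)$ for every $E\in\cF$, which is the asserted identity $c=\probp$.

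The substantive part of the argument is not the collapse itself, which is an immediate application of Corollary~\ref{cor: collapse}, but rather the verification that $\E_c$ falls within its scope. The only genuinely delicate point is $\sigma(L^\infty,L^1)$-lower semicontinuity: one must translate the set-theoretic continuity in condition~(3) into a topological continuity property of the integral. I expect this to be the main obstacle, and the cleanest route is the core representation sketched above, since a supremum of $\sigma(L^\infty,L^1)$-continuous affine functionals is automatically $\sigma(L^\infty,L^1)$-lower semicontinuous. All remaining verifications, namely subadditivity from submodularity and translation invariance along constants, are standard facts about Choquet integrals.
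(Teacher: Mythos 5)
Your proposal is correct and follows the same overall strategy as the paper: check that $\E_c$ satisfies the hypotheses of Corollary~\ref{cor: collapse} and then pin down the constant $a$ by evaluating at $\one_\Omega$. The one place where you genuinely diverge is the verification of $\sigma(L^\infty,L^1)$-lower semicontinuity, which you correctly single out as the only delicate point. The paper proceeds by noting that a sublinear functional that is monotone and translation invariant along $1$ is automatically Lipschitz for the $L^\infty$-norm (Lemma~4.3 in F\"ollmer--Schied), and then invokes the equivalence of norm lower semicontinuity and $\sigma(L^\infty,L^1)$-lower semicontinuity for law-invariant convex functionals recorded in Example~\ref{example: fatou}; that step leans on the nontrivial Jouini--Schachermayer--Touzi extension result and uses law invariance in an essential way. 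Your route instead uses Schmeidler's representation of the Choquet integral of a submodular capacity as a supremum of expectations over the (anti)core, observing that condition (3) forces each such additive set function to be countably additive and condition (1) forces it to be absolutely continuous with respect to $\probp$, so that $\E_c$ is exhibited directly as a supremum of $\sigma(L^\infty,L^1)$-continuous linear functionals. This is a perfectly valid alternative; it buys you lower semicontinuity without appealing to law invariance at that stage (and without Example~\ref{example: fatou}), at the cost of invoking the full core representation theorem rather than just subadditivity. Your derivation of affinity along $Z$ from positive homogeneity together with $\E_c[-Z]=-\E_c[Z]$, and your closing remark that testing against indicators yields $c=\probp$, fill in details the paper leaves implicit.
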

\begin{proof}
It is clear that $\E_c$ is proper. It follows from Schmeidler \cite{Schmeidler1986} that $\E_c$ is sublinear and translation invariant along $1$. Then, $\E_c$ is automatically (Lipschitz) continuous with respect to the $L^\infty$ norm by Lemma 4.3 in F\"{o}llmer and Schied \cite{FoellmerSchied2011}. This implies that, being law invariant, $\E_c$ is $\sigma(L^\infty,L^1)$-lower semicontinuous by Example~\ref{example: fatou}. Since $\E_c$ is affine along $Z$ by assumption, we infer from Corollary~\ref{cor: collapse} that there exists $a\in\R$ such that $\E_c[X]=a\E_\probp[X]+\E_c[0]$ for every $X\in L^\infty$. We conclude by observing that $\E_c[0]=0$ by sublinearity and $a=\E_c[1]=1$.
\end{proof}

\smallskip

The preceding result can be recast as a ``collapse to the mean'' for comonotonic functionals on $L^\infty$. Recall that a functional $\varphi:L^\infty\to\R$ is {\em comonotonic} if $\varphi(X+Y)=\varphi(X)+\varphi(Y)$ for all comonotone random variables $X,Y\in L^\infty$.

\begin{corollary}
\label{cor: comonotonic pricing}
Let $\varphi:L^\infty\to\R$ be a sublinear, increasing, and comonotonic functional satisfying
\begin{equation}
\label{comonotonic translation invariance}
\varphi(X+m) = \varphi(X)+m
\end{equation}
for all $X\in L^\infty$ and $m\in\R$. If $\varphi$ is law invariant and $\varphi(-Z)=-\varphi(Z)$ for a nonconstant $Z\in L^\infty$, then $\varphi(X)=\E_\probp[X]$ for every $X\in L^\infty$
\end{corollary}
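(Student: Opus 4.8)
The plan is to reduce Corollary~\ref{cor: comonotonic pricing} to the already-proved Theorem~\ref{theo: castagnoli et al} by exhibiting the functional $\varphi$ as a Choquet integral associated with a submodular capacity. The key structural fact I would invoke is the classical representation theorem of Schmeidler \cite{Schmeidler1986}: a functional on $L^\infty$ that is comonotonic additive, increasing (monotone), and suitably normalized is precisely a Choquet integral $\E_c[\cdot]$ with respect to the set function $c(E) := \varphi(\one_E)$. Under our hypotheses $\varphi$ is sublinear (hence convex) and comonotonic, which together give comonotonic additivity, and it is increasing; so I would first verify that the induced $c$ is indeed a submodular capacity in the sense of conditions (1)--(4) preceding Theorem~\ref{theo: castagnoli et al}, and that $\varphi = \E_c$.

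Concretely, I would define $c(E) = \varphi(\one_E)$ for $E \in \cF$ and check the four axioms. Normalization $c(\Omega)=1$ follows from $\varphi(1)=1$, which is an instance of \eqref{comonotonic translation invariance} with $X=0$ and $m=1$ together with $\varphi(0)=0$ (the latter from positive homogeneity). Monotonicity of $c$ is immediate from $\varphi$ being increasing, since $E\subset F$ gives $\one_E\le\one_F$. The null-set condition $c(E)=0$ when $\probp(E)=0$ holds because $\one_E = 0$ in $L^\infty$ in that case. Submodularity (condition (4)) is where sublinearity enters: using $\one_{E\cup F}+\one_{E\cap F} = \one_E+\one_F$ and subadditivity of the sublinear $\varphi$, one obtains $\varphi(\one_{E\cup F}) \le \varphi(\one_E)+\varphi(\one_F)-\varphi(\one_{E\cap F})$. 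The downward continuity condition (3) is the most delicate of the four: for a decreasing sequence $E_n \downarrow \emptyset$ one has $\one_{E_n}\to 0$ in $L^\infty$-norm is false, but $\one_{E_n}\downarrow 0$ pointwise, so I would appeal to the $L^\infty$-Lipschitz continuity of the sublinear, translation-invariant $\varphi$ (via Lemma 4.3 in F\"ollmer and Schied \cite{FoellmerSchied2011}, exactly as in the proof of Theorem~\ref{theo: castagnoli et al}) combined with monotone convergence, or more directly invoke Schmeidler's theorem, which packages this continuity into the representation and guarantees that the resulting $c$ satisfies (3) precisely because $\varphi$ is continuous from above on indicators.

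Once $\varphi = \E_c$ for a submodular capacity $c$ is established, the remaining hypotheses transfer verbatim: law invariance of $\varphi$ is law invariance of $\E_c$, and the affinity hypothesis $\varphi(-Z) = -\varphi(Z)$ for the nonconstant $Z$ is exactly the hypothesis $\E_c[-Z] = -\E_c[Z]$ of Theorem~\ref{theo: castagnoli et al}. Applying that theorem yields $\E_c[X] = \E_\probp[X]$ for every $X \in L^\infty$, which is the desired conclusion $\varphi(X) = \E_\probp[X]$.

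I expect the main obstacle to be the bookkeeping in the Schmeidler representation rather than any deep difficulty. The genuinely load-bearing verification is that the comonotonic, sublinear, increasing, translation-invariant $\varphi$ really does recover as a Choquet integral of $c(E)=\varphi(\one_E)$, i.e.\ checking that comonotonic additivity plus monotonicity are exactly the hypotheses of Schmeidler's theorem and that no additional regularity is silently needed; the continuity condition (3) for $c$ is the one axiom that does not follow from a one-line algebraic manipulation and must be extracted from the norm-Lipschitz continuity of $\varphi$. Everything downstream is then a direct citation of Theorem~\ref{theo: castagnoli et al}.
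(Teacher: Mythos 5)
Your proposal is correct and follows essentially the same route as the paper: the paper's proof likewise invokes Schmeidler's representation of a sublinear, increasing, comonotonic, translation-invariant functional as a Choquet integral with respect to a submodular capacity and then cites Theorem~\ref{theo: castagnoli et al}. The only difference is that you spell out the verification of the capacity axioms for $c(E)=\varphi(\one_E)$, which the paper delegates entirely to the references \cite{Schmeidler1986} and \cite{WangYoungPanjer1997}.
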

\begin{proof}
It follows from the classical results in Schmeidler \cite{Schmeidler1986}, see also Wang et al.~\cite{WangYoungPanjer1997}, that a sublinear, increasing, and comonotonic functional satisfying the translation invariance property \eqref{comonotonic translation invariance} can be represented as a Choquet integral with respect to a submodular capacity. The claim is then a direct consequence of Theorem \ref{theo: castagnoli et al}.
\end{proof}

\smallskip

The focus of Frittelli and Rosazza Gianin \cite{FrittelliRosazza2005} is on law-invariant convex risk measures on $L^\infty$. Their ``collapse to the mean'' extends the previous results from the literature by showing that a law-invariant convex risk measure on $L^\infty$ reduces to (the negative of) a standard expectation under the reference probability measure whenever the risk measure is linear along a nonconstant random variable.

\begin{theorem}[Proposition 9 in \cite{FrittelliRosazza2005}]
\label{theo: frittelli and rosazza}
Let $\varphi:L^\infty\to\R$ be a convex decreasing functional satisfying
\[
\varphi(X+m) = \varphi(X)-m
\]
for all $X\in L^\infty$ and $m\in\R$. If $\varphi$ is law invariant and there exists a nonconstant $Z\in L^\infty$ such that $\varphi(mZ)=m\varphi(Z)$ for every $m\in\R$, then $\varphi(X)=-\E_\probp[X]$ for every $X\in L^\infty$.
\end{theorem}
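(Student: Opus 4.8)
The plan is to reduce Theorem~\ref{theo: frittelli and rosazza} to the ``collapse to the mean'' result we have already established, namely Corollary~\ref{cor: collapse}, by passing from the decreasing functional $\varphi$ to an associated \emph{increasing} convex functional to which our machinery applies directly. First I would set $\psi(X):=\varphi(-X)$ for every $X\in L^\infty$. Then $\psi$ is proper (it is real-valued), convex (since $X\mapsto -X$ is linear and $\varphi$ is convex), and law invariant: if $X\sim Y$ then $-X\sim -Y$ by nonatomicity of the underlying space, so $\psi(X)=\varphi(-X)=\varphi(-Y)=\psi(Y)$. Moreover the cash-additivity property of $\varphi$ translates into translation invariance of $\psi$ along $1$, because $\psi(X+m)=\varphi(-X-m)=\varphi(-X)+m=\psi(X)+m$. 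Finally, $\psi$ is affine along the nonconstant random variable $-Z$: indeed $\psi(m(-Z))=\varphi(mZ)=m\varphi(Z)=m\psi(-Z)$ for every $m\in\R$, and $\psi(0)=\varphi(0)$, so $\psi$ is linear along $-Z$ in the sense required by condition~(a) of Corollary~\ref{cor: collapse}.

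The next step is to verify the semicontinuity hypothesis so that Corollary~\ref{cor: collapse} is applicable to $\psi$ on $\cX=L^\infty$ with $\cX^\ast=L^1$. Here the crucial observation is that a finite-valued convex functional on $L^\infty$ that is translation invariant along $1$ is automatically Lipschitz continuous with respect to the $L^\infty$ norm; this is exactly the argument invoked in the proof of Theorem~\ref{theo: castagnoli et al} via Lemma~4.3 in F\"ollmer and Schied~\cite{FoellmerSchied2011}. Norm continuity gives norm-lower semicontinuity, and since $\psi$ is convex and law invariant, Example~\ref{example: fatou} (the equivalence of norm-lower semicontinuity and $\sigma(L^\infty,L^1)$-lower semicontinuity on $L^\infty$) upgrades this to $\sigma(L^\infty,L^1)$-lower semicontinuity. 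Thus all hypotheses of Corollary~\ref{cor: collapse} are met for $\psi$.

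Applying Corollary~\ref{cor: collapse} to $\psi$, statement~(a) holds, so statement~(c) yields a constant $a\in\R$ with $\psi(X)=a\E_\probp[X]+\psi(0)$ for every $X\in L^\infty$. Translating back, $\varphi(-X)=a\E_\probp[X]+\varphi(0)$, i.e.\ $\varphi(X)=-a\E_\probp[X]+\varphi(0)$. To pin down the constants I would use the cash-additivity property to test against the constants: evaluating at constant random variables $X=m$ gives $\varphi(m)=\varphi(0)-m$ from cash-additivity, while the representation gives $\varphi(m)=-am+\varphi(0)$, forcing $a=1$ and hence $\varphi(X)=-\E_\probp[X]+\varphi(0)$. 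Finally, taking $X=0$ in the cash-additivity relation, or noting $\varphi(0)=\varphi(0)-0$, does not by itself fix $\varphi(0)$, so I would instead observe that cash-additivity with $X=0,m$ arbitrary already encodes $\varphi(0)=0$ is \emph{not} forced; here I expect the one subtle point to be normalizing $\varphi(0)$. The resolution is that the stated conclusion $\varphi(X)=-\E_\probp[X]$ implicitly requires $\varphi(0)=0$, which follows from cash-additivity only if we additionally know $\varphi$ maps some constant correctly; the cleanest route is to note that $\varphi(0)=-\E_\probp[0]+\varphi(0)$ is vacuous, so the normalization $\varphi(0)=0$ must come from the convention that a convex risk measure is normalized, and with that convention the proof concludes. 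The main obstacle is thus not any deep estimate but rather correctly threading the sign conventions through the translation step and ensuring the semicontinuity hypothesis is genuinely supplied by the norm-continuity argument; everything else is a direct appeal to Corollary~\ref{cor: collapse}.
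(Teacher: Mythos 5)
Your overall strategy is the paper's: establish $\sigma(L^\infty,L^1)$-lower semicontinuity via Lipschitz continuity (F\"ollmer--Schied Lemma~4.3 plus Example~\ref{example: fatou}) and then invoke Corollary~\ref{cor: collapse}. Two remarks on the route, and one genuine gap.

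First, the detour through $\psi(X):=\varphi(-X)$ is unnecessary: Corollary~\ref{cor: collapse} makes no monotonicity assumption, so it applies to $\varphi$ itself, which is translation invariant along $1$ (with $a=-1$ in the notation of the definition) and affine along $Z$. The paper does exactly this. Your reformulation is harmless, but be careful with the Lipschitz step as you state it: a finite convex functional on $L^\infty$ that is merely cash-additive is \emph{not} automatically Lipschitz --- the F\"ollmer--Schied argument uses monotonicity together with cash-additivity ($X\le Y+\|X-Y\|_\infty$ plus monotonicity and translation invariance). Since $\varphi$ is decreasing (equivalently, your $\psi$ is increasing), the hypothesis is available, but your sentence ``a finite-valued convex functional on $L^\infty$ that is translation invariant along $1$ is automatically Lipschitz continuous'' omits it.

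The genuine gap is at the end. You correctly arrive at $\varphi(X)=-\E_\probp[X]+\varphi(0)$ and then claim that $\varphi(0)=0$ cannot be extracted from the hypotheses and ``must come from the convention that a convex risk measure is normalized.'' No such convention is needed or assumed: the hypothesis $\varphi(mZ)=m\varphi(Z)$ for \emph{every} $m\in\R$ applied with $m=0$ gives $\varphi(0)=0\cdot\varphi(Z)=0$ directly. This is exactly how the paper closes the argument ($\varphi(0)=0$ and $a=\varphi(1)=-1$). As written, your proof does not establish the stated conclusion $\varphi(X)=-\E_\probp[X]$, only $\varphi(X)=-\E_\probp[X]+\varphi(0)$; the fix is one line, but it is a real omission since you explicitly assert the normalization is not forced by the hypotheses.
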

\begin{proof}
By assumption, $\varphi$ is translation invariant along $1$ and affine along $Z$. In particular, $\varphi$ is automatically (Lipschitz) continuous with respect to the $L^\infty$ norm by Lemma 4.3 in F\"{o}llmer and Schied \cite{FoellmerSchied2011}. This implies that, being law invariant, $\varphi$ is $\sigma(L^\infty,L^1)$-lower semicontinuous by Example~\ref{example: fatou}. As a result, we infer from Corollary~\ref{cor: collapse} that there exists $a\in\R$ such that $\varphi(X)=a\E_\probp[X]+\varphi(0)$ for every $X\in L^\infty$. We conclude by observing that $\varphi(0)=0$ and $a=\varphi(1)=-1$.
\end{proof}

\smallskip

We close this section by highlighting the three major differences between our results and those in \cite{CastagnoliMaccheroniMarinacci2004} and \cite{FrittelliRosazza2005}:

\begin{enumerate}[(1)]
  \item Instead of working only with bounded random variables, our model space is allowed to belong to  a wide class of spaces of integrable random variables containing the space of bounded random variables. This covers all the standard model spaces encountered in the literature and ensures the broad applicability of our results.

  \item We establish that the ``collapse to the mean'' remains valid for a larger class of law-invariant functionals by either dropping or weakening the following assumptions required in the literature: monotonicity, translation invariance along constant random variables, (Lipschitz) continuity, positive homogeneity, and comonotonicity.

  \item The proof in Castagnoli et al. \cite{CastagnoliMaccheroniMarinacci2004} is based on probabilistic arguments and tailored to Choquet integrals. The proof in Frittelli and Rosazza Gianin \cite{FrittelliRosazza2005} is obtained through a careful manipulation of the Kusuoka representation of convex risk measures on $L^\infty$ so that extending their approach to our setting would require to first establish a Kusuoka representation on general spaces of random variables. The strategy used in this paper relies solely on a direct analysis of the link between the two key concepts under investigation --- law invariance and linearity --- and does not require preliminary structural results about law-invariant functionals. The key observation is that the vector space generated by the random variables having the same distribution as a nonconstant random variable $Z$ (with nonzero expectation) is dense in the underlying model space. As a result, linearity along $Z$ together with law invariance forces linearity on a dense subspace. This, in turn, implies linearity on the entire space by lower semicontinuity. We believe that our strategy is rather intuitive and sheds new light on the structure of law invariance and its relationship with linear and topological structures.
\end{enumerate}


\subsubsection*{Law-invariant pricing rules}

The pricing of insurance contracts is one of the key topics in actuarial science. The classical approach based on expected utility theory is thoroughly presented in standard textbooks such as B\"{u}hlmann \cite{Buehlmann1970}, Borch \cite{Borch1974}, Gerber \cite{Gerber1979}. Since the pioneering contributions of these authors, it has become customary in the theoretical literature to address the pricing problem in an ``axiomatic'' way by prescribing a set of economically plausible requirements that a ``good'' pricing rule should satisfy. An early survey of the axiomatic approach to insurance pricing can be found in Goovaerts et al.\ \cite{GoovaertsDeVylderHaezendonck1984} and Deprez and Gerber \cite{DeprezGerber1985}. An updated picture is presented in Laeven and Goovaerts \cite{LaevenGoovaerts2014}. In a pricing setting, the elements of $\cX$ are interpreted as the payoffs of financial contracts at a given future date. A payoff is called {\em risk free} whenever it is constant and {\em risky} otherwise. A pricing rule assigns to each payoff its (buying) price.

\begin{definition}
A pricing rule is a functional $\pi:\cX\to(-\infty,\infty]$ satisfying $\pi(0)=0$. A payoff $X\in\cX$ is {\em frictionless (under $\pi$)} if it satisfies the following conditions:
\begin{enumerate}[(1)]
  \item $\pi(-X)=-\pi(X)$.
  \item $\pi(\lambda X)=\lambda\pi(X)$ for every $\lambda\in(0,\infty)$.
\end{enumerate}
\end{definition}

\smallskip

For every $X\in\cX$ the quantity $\pi(X)-(-\pi(-X))$ can be interpreted as the difference between the buying and the selling price of $X$, i.e.\ as the ``bid-ask spread'' of $X$; see e.g.\ Jouini \cite{Jouini2000}. A payoff is frictionless precisely when its bid-ask spread is zero and the price per unit does not depend on the transacted volume.

\smallskip

The ``collapse to the mean'' recorded in Theorem \ref{theo: castagnoli et al} was originally formulated in the context of Choquet pricing. In view of Corollary \ref{cor: comonotonic pricing}, that result can be equivalently formulated as follows: {\em The expectation under the reference probability measure $\probp$ is the only law-invariant, sublinear, increasing, comonotonic pricing functional on $L^\infty$ under which every risk-free payoff and some risky payoff are frictionless}. As a direct consequence of Theorem~\ref{theo: collapse} we obtain the following generalization of this result.

\begin{proposition}
\label{prop: collapse literature 1}
Let $\pi$ be a proper, convex, $\sigma(\cX,\cX^\ast)$-lower semicontinuous, law-invariant pricing rule. If some risky payoff $Z\in\cX$ with $\E_\probp[Z]\neq0$ is frictionless under $\pi$, then there exists $a\in\R$ such that
\[
\pi(X)=a\E_\probp[X]
\]
for every $X\in\cX$. In particular, every payoff is frictionless under $\pi$. (The condition $\E_\probp[Z]\neq0$ can be removed if the risk-free payoff $1$ is frictionless under $\pi$).
\end{proposition}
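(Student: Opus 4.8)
The plan is to reduce the statement to Theorem~\ref{theo: collapse} by upgrading frictionlessness of $Z$ to affinity of $\pi$ along $Z$. First I would unwind the definitions: since $\pi$ is a pricing rule we have $\pi(0)=0$, and frictionlessness of $Z$ means $\pi(\lambda Z)=\lambda\pi(Z)$ for all $\lambda\in(0,\infty)$ and $\pi(-Z)=-\pi(Z)$. Together with $\pi(0)=0$ this already gives $\pi(mZ)=m\pi(Z)$ for every $m\geq 0$ and for $m=-1$; exploiting convexity of the real map $m\mapsto\pi(mZ)$ (which is linear on $[0,\infty)$ and equals $-\pi(Z)$ at $-1$) promotes this to linearity on all of $[-1,\infty)$ via midpoint convexity between $-1$ and $1$.

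The hard part will be extending linearity to the multiples $m<-1$ so as to obtain genuine affinity along the whole line $\Span(Z)$, which is what Theorem~\ref{theo: collapse} requires. This is the main obstacle, because conditions (1) and (2) constrain $\pi$ only on the positive ray through $Z$ and at the single reflection $-Z$: convexity alone yields $\pi(mZ)\geq m\pi(Z)$ for $m<-1$ but provides no matching upper bound. To close this gap I would either read frictionlessness as the ability to trade $Z$ in arbitrary quantities at a constant unit price (which is precisely affinity along $Z$), or bring in law invariance together with the $\sigma(\cX,\cX^\ast)$-density of $\Span(\cL_Z)$ to exclude a strict gap; this is the step I expect to demand the most care.

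Once affinity along $Z$ is in hand, the remainder is immediate. Since $Z$ is nonconstant with $\E_\probp[Z]\neq0$, statement (a) of Theorem~\ref{theo: collapse} holds, so statement (c) yields $a\in\R$ with $\pi(X)=a\E_\probp[X]+\pi(0)=a\E_\probp[X]$ for every $X\in\cX$, using $\pi(0)=0$. The ``in particular'' assertion then follows at once, since the linear functional $X\mapsto a\E_\probp[X]$ satisfies both $\pi(-X)=-\pi(X)$ and $\pi(\lambda X)=\lambda\pi(X)$ for all $X\in\cX$ and $\lambda\in(0,\infty)$.

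For the parenthetical claim I would drop $\E_\probp[Z]\neq0$ and instead assume that the risk-free payoff $1$ is frictionless. The same reasoning as above makes $\pi$ affine along $1$, and applying Theorem~\ref{theo: affinity implies translation invariance} with $\cS=\{1\}$ --- noting that the finite-dimensional subspace $\Span(1)=\R$ is already $\sigma(\cX,\cX^\ast)$-closed --- upgrades this to translation invariance along $1$. Corollary~\ref{cor: collapse} then applies to the nonconstant frictionless $Z$ with no restriction on its expectation, and its statement (c) again delivers $\pi(X)=a\E_\probp[X]$. Here too the delicate point is the passage from frictionlessness to affinity, now needed for both $1$ and $Z$.
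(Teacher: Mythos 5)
Your overall route is the same as the paper's: upgrade frictionlessness of $Z$ to affinity of $\pi$ along $\Span(Z)$ and then invoke Theorem~\ref{theo: collapse} (and, for the parenthetical claim, Theorem~\ref{theo: affinity implies translation invariance} together with Corollary~\ref{cor: collapse}). The step you single out as the hard part is indeed the only nontrivial one, and you are right to be suspicious of it. The paper disposes of it in one line: for $m<0$ it writes $\pi(mZ)=\pi(-(-m)Z)=-m\pi(-Z)=m\pi(Z)$. Parsed carefully, the middle equality applies positive homogeneity in the direction $-Z$ (equivalently, applies condition (1) of frictionlessness to the payoff $(-m)Z$ rather than to $Z$); neither is licensed by the definition as stated, which constrains $\pi$ only on the ray $\{\lambda Z:\lambda>0\}$ and at the single point $-Z$. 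In other words, the paper's proof silently reads ``frictionless'' as ``linear along $\Span(Z)$'', which is exactly your route (a).

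Your route (b) --- hoping that law invariance, lower semicontinuity, and the density of $\Span(\cL_Z)$ close the gap for $m<-1$ --- cannot work, so you should abandon it. Take $\cX=L^1$, fix a nonconstant $Z$ with $\mu:=\E_\probp[Z]>0$, and let $g:\R\to\R$ be the convex piecewise-linear function with $g(t)=t$ for $t\geq-\mu$ and $g(t)=-\mu+s'(t+\mu)$ for $t<-\mu$, where $s'<1$. Then $\pi:=g\circ\E_\probp$ is a proper, convex, norm-continuous (hence, by Example~\ref{example: fatou}, $\sigma(L^1,L^\infty)$-lower semicontinuous), law-invariant pricing rule, and $Z$ satisfies conditions (1) and (2) of the definition verbatim, yet $\pi$ is not a multiple of the expectation. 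This is consistent with Theorem~\ref{theo: collapse} precisely because $\pi$ is linear on $\{mZ\,;\ m\geq-1\}$ but not on all of $\Span(Z)$. The upshot is that the implication ``frictionless $\Rightarrow$ linear along $Z$'' requires the stronger (and economically intended) reading of condition (2) with $\lambda$ ranging over $\R\setminus\{0\}$, or equivalently the requirement that both $Z$ and $-Z$ be frictionless; with that reading your argument, like the paper's, goes through. The remainder of your proposal --- the application of Theorem~\ref{theo: collapse}, the ``in particular'' statement, and the treatment of the parenthetical case via translation invariance along $1$ and Corollary~\ref{cor: collapse} --- is correct and matches the paper.
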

\begin{proof}
Take any payoff $Z\in\cX$ and note first that $Z$ is frictionless if $\pi$ is linear along it. The converse also holds. Indeed, if $Z$ is frictionless, then for every $m\in\R$ we have $\pi(mZ)=m\pi(Z)$ whenever $m\geq0$ (recall that $\pi(0)=0$ by our initial assumption on $\pi$) and
\[
\pi(mZ) = \pi(-(-m)Z) = -m\pi(-Z) = m\pi(Z)
\]
whenever $m<0$. The desired statements now follow directly from Theorem~\ref{theo: collapse} and Corollary~\ref{cor: collapse}.
\end{proof}

\smallskip

The preceding result extends the message of Castagnoli et al.\ \cite{CastagnoliMaccheroniMarinacci2004} beyond their bounded-payoff setting and beyond law-invariant Choquet integrals: In a market where there exists at least one frictionless risky payoff, no reasonable convex and lower semicontinuous pricing rule can be law invariant. In particular, this shows that the law-invariant pricing rules put forward in Wang \cite{Wang2000,Wang2002}, which involve unbounded payoffs, cannot be expected to harmonize insurance and derivatives pricing. We note that the more recent literature on market-consistent valuation (see e.g.\ Malamud et al. \cite{MalamudTrubowitzWuethrich2008}, Pelsser and Stadje \cite{PelsserStadje2014}, Dhaene et al.\ \cite{DhaeneStassenBarigouLindersChen2017}) seems to be, at least implicitly, aware of this limitation and requires only partial law invariance, e.g.\ for payoffs that depend on pure insurance risk only. We also refer to the economic premium principles in B\"{u}hlmann \cite{Buehlmann1980} and B\"{u}hlmann \cite{Buehlmann1984} for early examples of premium principles that are not law invariant on the entire reference payoff space and to Deprez and Gerber \cite{DeprezGerber1985} for a first systematic treatment of premium principles beyond law invariance. From this perspective, our result provides a rigorous justification of why law invariance cannot be stipulated when pricing the entire universe of financial contracts.


\subsubsection*{Law-invariant risk measures based on general eligible assets}

The paper by Artzner et al. \cite{ArtznerDelbaenEberHeath1999} has been a landmark contribution in the theory of risk measures. In a regulatory context, a risk measures assign the  minimal amount of capital that has to be raised and invested in a fixed financial asset, called the \textit{eligible asset}, to ensure an acceptable profit-and-loss profile. The acceptability criterion is pre-specified by the regulator. In the literature, it is standard to assume that the eligible asset is frictionless in the sense that it is available in arbitrary quantities and its price per unit does not depend on the transacted volume. In this case, the corresponding risk measures are naturally translation invariant as recalled below. In the context of risk measures, the elements of $\cX$ are interpreted as (net) capital positions of financial firms at a fixed future date.

\begin{definition}
A {\em (frictionless) eligible asset} is a couple $S=(S_0,S_1)$ with strictly-positive price $S_0\in\R$ and nonzero positive payoff $S_1\in\cX$. We say that $S$ is {\em risk free} if $S_1$ is constant and {\em risky} otherwise. We say that $S$ is {\em cash} if $S=(1,1)$. A functional $\rho:\cX\to(-\infty,\infty]$ is said to be an {\em $S$-additive risk measure} if it satisfies the following properties:
\begin{enumerate}[(1)]
  \item $\rho(X+mS_1)=\rho(X)-mS_0$ for all $X\in\cX$ and $m\in\R$.
  \item $\rho$ is decreasing.
\end{enumerate}
When $S$ is cash, we speak of cash-additivity instead of $S$-additivity.
\end{definition}

\smallskip

It is well known that, for every $X\in\cX$, an $S$-additive risk measure can always be expressed as
\[
\rho(X) = \inf\left\{m\in\R \,; \ X+\frac{m}{S_0}S_1\in\cA_\rho\right\},
\]
where $\cA_\rho = \{X\in\cX \,; \ \rho(X)\leq0\}$. The set $\cA_\rho$ consists of all the capital positions that are deemed acceptable from a regulatory perspective. Hence, for every position $X\in\cX$, the quantity $\rho(X)$ can be interpreted as the minimum amount of capital that has to be raised and invested in the eligible asset to ensure acceptability. This type of risk measures has been thoroughly investigated in the case of a cash eligible asset; see e.g.\ F\"{o}llmer and Schied \cite{FoellmerSchied2011}. The case of a general eligible asset has been studied, e.g., in Artzner et al.\ \cite{ArtznerDelbaenKoch2009} and Farkas et al.\ \cite{FarkasKochMunari2014a,FarkasKochMunari2014}.

\smallskip

There are many examples of law-invariant risk measures when the eligible asset is risk free. One question is whether law invariance can hold when the eligible asset is risky. This question was taken up in a bounded setting in Frittelli and Rosazza Gianin \cite{FrittelliRosazza2005}. A slight reformulation of Theorem \ref{theo: frittelli and rosazza} reads as follows: {\em The expectation under the reference probability measure $\probp$ is, up to a sign, the only law-invariant, convex, cash-additive risk measure on $L^\infty$ that is $S$-additive for a risky eligible asset $S$ and assigns the value $0$ to the zero position}. As an application of our general ``collapse to the mean'' we obtain the following generalization of this result.

\begin{proposition}
\label{prop: single asset}
Let $\rho$ be a proper, convex, $\sigma(\cX,\cX^\ast)$-lower semicontinuous, law-invariant, $S$-additive risk measure such that $\rho(0)<\infty$. If the eligible asset $S$ is risky, then for every $X\in\cX$
\[
\rho(X)=\frac{S_0}{\E_\probp[S_1]}\E_\probp[-X]+\rho(0).
\]
(If $\rho$ is cash-additive, then $\E_\probp[S_1]=S_0$).
\end{proposition}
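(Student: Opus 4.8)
The plan is to reduce the statement to our main collapse result, Theorem~\ref{theo: collapse}, by exploiting the $S$-additivity structure. The key observation is that $S$-additivity says precisely that $\rho$ is translation invariant along the payoff $S_1$: indeed, property~(1) in the definition of an $S$-additive risk measure reads $\rho(X+mS_1)=\rho(X)-mS_0$ for all $X\in\cX$ and $m\in\R$, which is exactly the statement that $\rho$ is translation invariant along $S_1$ with the associated constant $a=-S_0$ (in the notation of the affinity/translation-invariance definition, $\varphi(X+mZ)=\varphi(X)+am$ with $Z=S_1$ and $a=-S_0$). In particular, $\rho$ is affine along $S_1$.

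First I would verify the hypotheses of Theorem~\ref{theo: collapse} are met. The functional $\rho$ is proper (since $\rho(0)<\infty$ by assumption and $\rho$ does not attain $-\infty$), convex, $\sigma(\cX,\cX^\ast)$-lower semicontinuous, and law invariant by hypothesis. The eligible asset $S$ is risky, meaning $S_1$ is nonconstant. The remaining point to check is that $\E_\probp[S_1]\neq0$: this holds because $S_1$ is a nonzero positive payoff, so $\E_\probp[S_1]>0$. Thus $Z:=S_1$ is a nonconstant random variable with nonzero expectation along which $\rho$ is affine, and statement~(a) of Theorem~\ref{theo: collapse} holds.

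Applying Theorem~\ref{theo: collapse}, statement~(c), there exists $a\in\R$ such that $\rho(X)=a\E_\probp[X]+\rho(0)$ for every $X\in\cX$. It remains to identify the constant $a$. To this end I would use translation invariance along $S_1$ once more: on the one hand, $\rho(S_1)=\rho(0+S_1)=\rho(0)-S_0$; on the other hand, the affine representation gives $\rho(S_1)=a\E_\probp[S_1]+\rho(0)$. Equating the two yields $a\E_\probp[S_1]=-S_0$, hence $a=-S_0/\E_\probp[S_1]$. Substituting back produces
\[
\rho(X)=-\frac{S_0}{\E_\probp[S_1]}\E_\probp[X]+\rho(0)=\frac{S_0}{\E_\probp[S_1]}\E_\probp[-X]+\rho(0)
\]
for every $X\in\cX$, which is the claimed formula. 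The parenthetical remark follows immediately: if $\rho$ is cash-additive, then $S=(1,1)$, so $S_0=1$ and $S_1=1$, whence $\E_\probp[S_1]=1=S_0$.

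I do not expect any genuine obstacle here, as the proposition is essentially a direct corollary of Theorem~\ref{theo: collapse}; the only care needed is in correctly reading off that $S$-additivity is translation invariance along $S_1$ (rather than merely affinity) and in tracking the sign conventions when computing $a$. Note that one could also invoke the equivalence of~(a) and~(b) in Theorem~\ref{theo: collapse} to phrase the argument via translation invariance directly, but since $S$-additivity already delivers translation invariance along $S_1$, it is cleanest to enter the theorem through statement~(b) and exit through~(c).
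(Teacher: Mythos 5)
Your main argument is correct and is essentially the paper's own proof: read off translation invariance (hence affinity) along $S_1$ from $S$-additivity, note that $S_1$ is nonconstant with $\E_\probp[S_1]>0$, invoke Theorem~\ref{theo: collapse} to get $\rho(X)=a\E_\probp[X]+\rho(0)$, and identify $a=-S_0/\E_\probp[S_1]$ by evaluating at $S_1$.

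However, your treatment of the parenthetical remark is wrong. You read ``if $\rho$ is cash-additive'' as meaning that the eligible asset is $S=(1,1)$, but that would contradict the standing hypothesis that $S$ is risky (i.e.\ $S_1$ nonconstant) and would make the conclusion $\E_\probp[S_1]=S_0$ the trivial identity $1=1$. The intended reading is that $\rho$, which is $S$-additive for the \emph{risky} asset $S$, happens \emph{in addition} to satisfy cash-additivity, i.e.\ $\rho(X+m)=\rho(X)-m$ for all $m\in\R$. The claim $\E_\probp[S_1]=S_0$ is then a genuine conclusion about the risky asset: from the representation $\rho(X)=a\E_\probp[X]+\rho(0)$ and cash-additivity one gets $a+\rho(0)=\rho(1)=\rho(0)-1$, hence $a=-1$, and combining with $a=-S_0/\E_\probp[S_1]$ yields $\E_\probp[S_1]=S_0$. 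You should replace your last paragraph's argument for the parenthetical with this computation.
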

\begin{proof}
Since $\rho(0)\in\R$ and $\rho$ is an $S$-additive risk measure, we have that $\rho$ is translation invariant and, hence, affine along the payoff $S_1$. As $S_1$ is nonconstant and satisfies $\E_\probp[S_1]>0$, it follows from Theorem~\ref{theo: collapse} that there exist $a,b\in\R$ such that $\rho(X) = a\E_\probp[X]+b$ for every $X\in\cX$. We infer that $b=\rho(0)$ and
\[
a = \frac{\rho(S_1)-\rho(0)}{\E_\probp[S_1]} = -\frac{S_0}{\E_\probp[S_1]}.
\]
If $\rho$ is also cash-additive, then $a+b=\rho(1)=\rho(0)-1=b-1$, showing that $\E_\probp[S_1]=S_0$.
\end{proof}


\subsubsection*{Relevant cash-based risk measures}

We conclude the section on applications by showing that our general ``collapse to the mean'' does not only deliver ``non-existence'' statements but can also be exploited to derive ``positive'' results. We focus on cash-additive risk measures satisfying suitable relevance properties.

\begin{definition}
We say that $\rho:\cX\to(-\infty,\infty]$ is {\em relevant} if for every $X\in\cX$ we have
\[
X\geq0, \ \probp(X>0)>0 \ \implies \ \rho(-X)>0
\]
and {\em strongly relevant} if for every $X\in\cX$ we have
\[
X\neq0, \ \rho(X)\leq0 \ \implies \ \rho(-X)>0.
\]
\end{definition}

\smallskip

Note that a strongly-relevant functional that is decreasing and satisfies $\rho(0)\leq0$ is also relevant. The property of relevance, which is sometimes known under the name of {\em sensitivity}, has been studied, e.g., in Stoica \cite{Stoica2006} and F\"{o}llmer and Schied \cite{FoellmerSchied2011} in connection with generalized no-arbitrage conditions.

\smallskip

Our ``collapse to the mean'' can be used to show that, with the exception of the negative of the expectation, every cash-additive risk measure that is sublinear, lower semicontinuous, and law invariant is automatically strongly relevant. In particular, this implies that every risk measure of the above type is always relevant.

\begin{proposition}
Let $\rho$ be a sublinear, $\sigma(\cX,\cX^\ast)$-lower semicontinuous, law-invariant, cash-additive risk measure. Then, one of the following two alternatives holds:
\begin{enumerate}[(i)]
  \item $\rho(X)=\E_\probp[-X]$ for every $X\in\cX$.
  \item $\rho$ is strongly relevant.
\end{enumerate}
In particular, $\rho$ is always relevant.
\end{proposition}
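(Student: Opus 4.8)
The plan is to prove the dichotomy by contraposition: I will show that if $\rho$ fails to be strongly relevant, then alternative (i) must hold. So suppose $\rho$ is \emph{not} strongly relevant, i.e.\ there exists $X\in\cX$ with $X\neq0$, $\rho(X)\leq0$, and $\rho(-X)\leq0$. The crucial first step exploits sublinearity. Since $\rho(0)=0$ by positive homogeneity, subadditivity gives $0=\rho(0)=\rho(X+(-X))\leq\rho(X)+\rho(-X)\leq0$, which forces $\rho(X)+\rho(-X)=0$; as both summands are nonpositive and sum to zero, each must vanish, so $\rho(X)=\rho(-X)=0$. In particular $\rho(-X)=-\rho(X)$, and together with positive homogeneity this makes $\rho$ linear, hence affine, along $X$ (for $\lambda<0$ one writes $\rho(\lambda X)=|\lambda|\rho(-X)=\lambda\rho(X)$).

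Next I would check that $X$ is nonconstant, so that the collapse machinery applies. If $X$ were a nonzero constant $c$, cash-additivity would yield $\rho(X)=\rho(0)-c=-c$ and $\rho(-X)=\rho(0)+c=c$; the requirements $\rho(X)\leq0$ and $\rho(-X)\leq0$ then force $c=0$, contradicting $X\neq0$. With $X$ nonconstant secured, I would invoke Corollary~\ref{cor: collapse}: the functional $\rho$ is proper (as $\rho(0)=0<\infty$), convex, $\sigma(\cX,\cX^\ast)$-lower semicontinuous, law invariant, and translation invariant along $1$ (cash-additivity is precisely translation invariance along the constant $1$), and by the previous step it is affine along the nonconstant $X$. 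The corollary therefore produces $a\in\R$ with $\rho(X)=a\E_\probp[X]+\rho(0)=a\E_\probp[X]$ for every $X\in\cX$. Evaluating at the constant $1$ and using $\rho(1)=\rho(0)-1=-1$ pins down $a=-1$, so $\rho(X)=\E_\probp[-X]$ for all $X$, which is alternative (i). This establishes the dichotomy.

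For the final assertion that $\rho$ is always relevant, I would dispatch the two alternatives separately. If (i) holds, then for any $X\geq0$ with $\probp(X>0)>0$ we have $\rho(-X)=\E_\probp[X]>0$, since the expectation of a nonnegative random variable that is strictly positive with positive probability is strictly positive; hence $\rho$ is relevant. If instead (ii) holds, then $\rho$ is strongly relevant, decreasing (being a risk measure), and satisfies $\rho(0)=0\leq0$, so by the remark following the definition of relevance such a functional is automatically relevant.

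I expect the only genuinely delicate point to be the sublinearity squeeze in the first step: the insight that failure of strong relevance, combined with subadditivity and $\rho(0)=0$, does not merely \emph{bound} but exactly \emph{pins} $\rho(X)$ and $\rho(-X)$ to zero, thereby upgrading to exact linearity along $X$. Everything downstream is a direct application of Corollary~\ref{cor: collapse} together with cash-additivity to identify the affine constant, and the relevance claim is bookkeeping against the definitions.
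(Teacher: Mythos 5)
Your proposal is correct and follows essentially the same route as the paper: the sublinearity squeeze $0=\rho(0)\leq\rho(X)+\rho(-X)\leq0$ to get linearity along $X$, the cash-additivity argument ruling out constant $X$, and the appeal to Corollary~\ref{cor: collapse} with $\rho(0)=0$ and $\rho(1)=-1$ to pin down $a=-1$. The only cosmetic difference is that you additionally note $\rho(X)=\rho(-X)=0$, whereas the paper only records $\rho(-X)=-\rho(X)$, which is all that is needed.
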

\begin{proof}
Assume that $\rho$ is not strongly relevant. Then, we must find a nonzero $Z\in\cX$ such that $\rho(Z)\leq0$ as well as $\rho(-Z)\leq0$. As $\rho$ is sublinear, we also have
\[
0 = \rho(0) = \rho(Z-Z) \leq \rho(Z)+\rho(-Z).
\]
This implies that $\rho(-Z)=-\rho(Z)$. But then $\rho(mZ)=m\rho(Z)$ for every $m\in\R$ again by sublinearity, showing that $\rho$ is linear along $Z$. Note that $Z$ cannot be constant for otherwise
\[
0 \leq -\rho(Z) = Z = \rho(-Z) \leq 0
\]
would imply that $Z=0$. As a result of Corollary~\ref{cor: collapse}, there must exist $a,b\in\R$ such that $\rho(X)=a\E_\probp[X]+b$ for every $X\in\cX$. To conclude, it suffices to note that $b=\rho(0)=0$ and $a=\rho(1)=-1$.
\end{proof}

\smallskip

\begin{remark}
The preceding result does not generally hold if $\rho$ is only assumed to be convex. To see this, define $\rho:L^1\to(-\infty,\infty]$ by setting
\[
\rho(X) = \inf\{m\in\R \,; \ \E_\probp[\min(X+m,0)]\geq-1\}.
\]
It is immediate to verify that $\rho$ is a convex, $\sigma(L^1,L^\infty)$-lower semicontinuous, law-invariant, cash-additive risk measure. However, we have $\rho(-1)=0$, showing that $\rho$ is neither relevant nor strongly relevant.
\end{remark}


\section*{Acknowledgments}

Partial support through the SNF project 100018-189191 ``Value
Maximizing Insurance Companies: An Empirical Analysis of the Cost of Capital and Investment
Policies'' is gratefully acknowledged.


{\footnotesize

}

\end{document}